\documentclass{article}

% if you need to pass options to natbib, use, e.g.:
\PassOptionsToPackage{numbers, compress}{natbib}
% before loading neurips_2023

% ready for submission
% \usepackage{neurips_2023}

% to compile a preprint version, e.g., for submission to arXiv, add add the
% [preprint] option:
\usepackage[preprint]{neurips_2023}

% to compile a camera-ready version, add the [final] option, e.g.:
%     \usepackage[final]{neurips_2023}

% to avoid loading the natbib package, add option nonatbib:
%    \usepackage[nonatbib]{neurips_2023}

\usepackage[utf8]{inputenc} % allow utf-8 input
\usepackage[T1]{fontenc}    % use 8-bit T1 fonts
\usepackage{hyperref}       % hyperlinks
\usepackage{url}            % simple URL typesetting
\usepackage{booktabs}       % professional-quality tables
\usepackage{amsfonts}       % blackboard math symbols
\usepackage{nicefrac}       % compact symbols for 1/2, etc.
\usepackage{microtype}      % microtypography
\usepackage{xcolor}         % colors

\usepackage{graphicx}
\usepackage{subfigure}
\usepackage{amsmath}
\usepackage{amssymb}
\usepackage{mathtools}
\usepackage{amsthm}
\usepackage{graphbox}
\usepackage{multirow}
\usepackage{wrapfig}
\usepackage{algorithm}
\usepackage{algorithmic}

%%%%%%%%%%%%%%%%%%%%%%%%%%%%%%%%
% THEOREMS
%%%%%%%%%%%%%%%%%%%%%%%%%%%%%%%%
\theoremstyle{plain}
\newtheorem{theorem}{Theorem}[section]
\newtheorem{proposition}[theorem]{Proposition}

\theoremstyle{definition}

\theoremstyle{remark}

\title{Dual Self-Awareness Value Decomposition Framework without IGM for Cooperative MARL}

% The \author macro works with any number of authors. There are two commands
% used to separate the names and addresses of multiple authors: \And and \AND.
%
% Using \And between authors leaves it to LaTeX to determine where to break the
% lines. Using \AND forces a line break at that point. So, if LaTeX puts 3 of 4
% authors names on the first line, and the last on the second line, try using
% \AND instead of \And before the third author name.

\author{%
  Zhiwei Xu, Bin Zhang, Dapeng Li, Guangchong Zhou, Zeren Zhang, Guoliang Fan \\
  Institute of Automation, Chinese Academy of Sciences\\
  School of Artificial Intelligence, University of Chinese Academy of Sciences \\
  \texttt{\{xuzhiwei2019, guoliang.fan\}@ia.ac.cn}
  % examples of more authors
  % \And
  % Coauthor \\
  % Affiliation \\
  % Address \\
  % \texttt{email} \\
  % \AND
  % Coauthor \\
  % Affiliation \\
  % Address \\
  % \texttt{email} \\
  % \And
  % Coauthor \\
  % Affiliation \\
  % Address \\
  % \texttt{email} \\
  % \And
  % Coauthor \\
  % Affiliation \\
  % Address \\
  % \texttt{email} \\
}

\begin{document}

\maketitle

\begin{abstract}
Value decomposition methods have gained popularity in the field of cooperative multi-agent reinforcement learning. However, almost all existing methods follow the principle of Individual Global Max (IGM) or its variants, which limits their problem-solving capabilities. To address this, we propose a dual self-awareness value decomposition framework, inspired by the notion of dual self-awareness in psychology, that entirely rejects the IGM premise. Each agent consists of an ego policy for action selection and an alter ego value function to solve the credit assignment problem. The value function factorization can ignore the IGM assumption by utilizing an explicit search procedure. On the basis of the above, we also suggest a novel anti-ego exploration mechanism to avoid the algorithm becoming stuck in a local optimum. As the first fully IGM-free value decomposition method, our proposed framework achieves desirable performance in various cooperative tasks.
\end{abstract}

%%%%%%%%%%%%%%%%%%%%%%%%%%%%%%%%%%%%%%%%%%%%%%%%%%%%%%%%%%%%%%%%%

\section{Introduction}

Multi-agent reinforcement learning (MARL) has recently drawn increasing attention. MARL algorithms have been successfully applied in diverse fields, including game AI~\cite{Berner2019Dota2W} and energy networks~\cite{Wang2021MultiAgentRL}, to solve practical problems. The fully cooperative multi-agent task is the most common scenario, where all agents must cooperate to achieve the same goal. Several cooperative MARL algorithms have been proposed, with impressive results on complex tasks. Some algorithms~\cite{Foerster2016LearningTC, Sukhbaatar2016LearningMC} enable agents to collaborate through communication. However, these approaches incur additional bandwidth overhead. Most communication-free MARL algorithms follow the conventional paradigm of centralized training with decentralized execution (CTDE)~\cite{Lowe2017MultiAgentAF}. These works can be roughly categorized into multi-agent Actor-Critic algorithm~\cite{Foerster2017CounterfactualMP, Iqbal2018ActorAttentionCriticFM} and value decomposition method~\cite{Sunehag2017ValueDecompositionNF, Rashid2018QMIXMV}. Due to the high sample efficiency in off-policy settings, value decomposition methods can outperform other MARL algorithms.

However, due to the nature of the CTDE paradigm, the value decomposition method can only operate under the assumption of Individual Global Max (IGM)~\cite{Son2019QTRANLT}. The IGM principle links the optimal joint action to the optimal individual one, allowing the agent to find the action corresponding to the maximal Q value during decentralized execution. Nevertheless, this assumption is not applicable in most real-world scenarios. And most value decomposition approaches~\cite{Rashid2018QMIXMV} directly constrain the weights of the neural network to be non-negative, which restricts the set of global state-action value functions that can be represented. We confirm this point through experiments in Appendix~\ref{sec:IGM_based_or_free}.

Removing the IGM assumption requires agents to be able to find actions corresponding to the maximal joint Q value during execution. Therefore, in addition to the evaluation network, agents also need an additional network to learn how to \emph{search} for the best actions in a decentralized manner. Similarly, there are concepts of \emph{ego} and \emph{alter ego} in psychology. The ego usually refers to the conscious part of the individual, and Freud considered the ego to be the executive of the personality~\cite{Freud1953TheSE}. Some people 
\begin{wrapfigure}[12]{r}{0.52\linewidth}
	\centering
	\includegraphics[width=2.2 in]{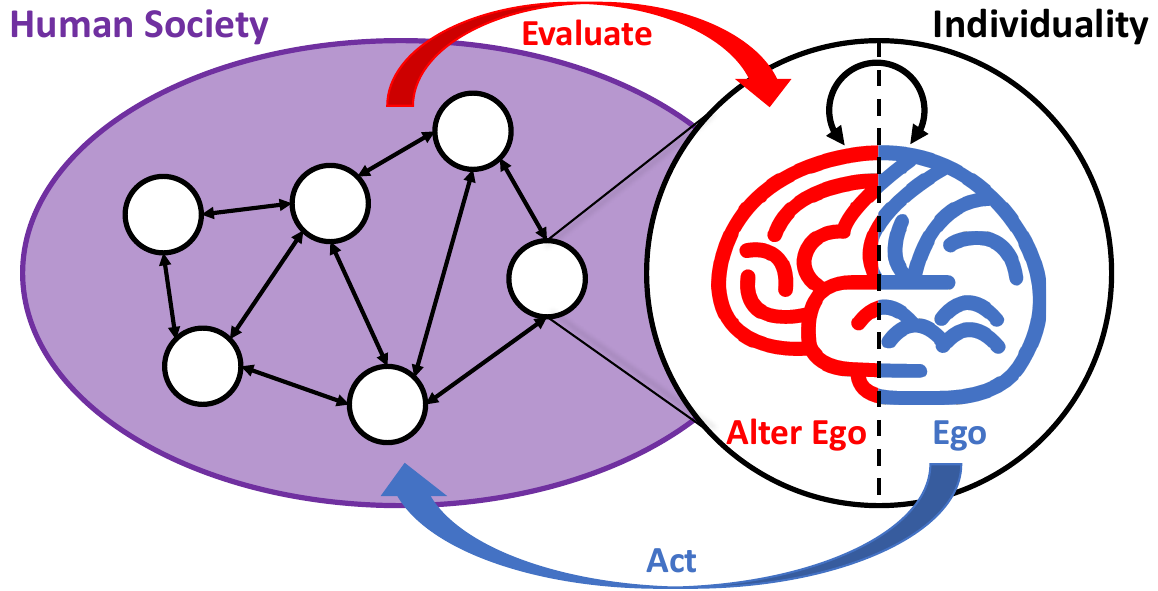}
    \caption{In psychology, it is commonly believed that the ego and alter ego determine an individual's behavior jointly.}
    \label{fig:intro}
\end{wrapfigure}
believe that an alter ego pertains to a different version of oneself from the authentic self~\cite{Pedersen1994CameralAA, Haule2010JungIT}. Others define the alter ego in more detail as the evaluation of the self by others~\cite{BuberIAT}. As a result, both ego and alter ego influence a person's personality and behaviour, as depicted in \figureautorefname~\ref{fig:intro}. Besides, \citet{Duval1972ATO} put forth the dual self-awareness hypothesis, which posits that individuals possess two states of consciousness: public self-awareness and private self-awareness. Private self-consciousness refers to an individual's internal understanding of himself, whereas public self-awareness enables individuals to recognize that they are evaluated by others. Enlightened by these psychological concepts, we propose a novel MARL algorithm, \textbf{D}ual self-\textbf{A}wareness \textbf{V}alue d\textbf{E}composition (\textbf{DAVE}). Each agent in DAVE maintains two neural network models: the alter ego model and the ego model. The former participates in the evaluation of the group to solve the global credit assignment problem, and the latter is not directly optimized by the joint state-action value. The ego policy can help the alter ego value function to find the individual action corresponding to the optimal joint action through explicit search procedures so that the value decomposition framework can completely eliminate the constraints of the IGM assumption. Furthermore, to prevent the ego policy based on the above framework from potentially getting stuck in a bad local optimum, we also propose an anti-ego exploration method. This method can effectively avoid the underestimation of the optimal action that may be caused by the limited search procedure.

We first evaluate the performance of DAVE in several didactic problems to demonstrate that DAVE can resolve the issues caused by IGM in conventional value decomposition methods. Then we run DAVE in various complex environments, such as StarCraft II and Multi-Agent MuJoCo~\cite{Peng2020FACMACFM}, and compare it to other popular baselines. Our experimental results show that DAVE can still achieve competitive performance despite the absence of IGM. To the best of our knowledge, DAVE is the first multi-agent value decomposition method that completely abandons the IGM assumption.  

%%%%%%%%%%%%%%%%%%%%%%%%%%%%%%%%%%%%%%%%%%%%%%%%%%%%%%%%%%%%%%%%%

\section{Related Work}

Recently, value decomposition methods have made significant progress in problems involving the decentralized partially observable Markov decision process (Dec-POMDP)~\cite{Oliehoek2016ACI}. By calculating a joint state-action value function based on the utility functions of all agents, value decomposition methods can effectively alleviate the notorious credit assignment issue. The first proposed method, VDN~\cite{Sunehag2017ValueDecompositionNF}, simply sums up the utility of all agents to evaluate joint actions. This undoubtedly limits the range of joint value functions that can be approximated. QMIX~\cite{Rashid2018QMIXMV} incorporates hypernetworks~\cite{Ha2016HyperNetworks} into the value decomposition method and employs the mixing network to fit the joint value function based on the individual ones, thereby expressing richer classes of joint value function. To enable the agent to choose actions with the greedy policy during decentralized training, QMIX assumes monotonicity and restricts the weight parameters of the mixing network output by hypernetworks to non-negative. So the monotonic mixing network still cannot represent some class of value functions. Various value decomposition variants have subsequently been proposed. Officially proposing IGM, QTRAN~\cite{Son2019QTRANLT} indicated that the value decomposition method must satisfy the assumption that the optimal joint actions across agents are equivalent to the collection of individual optimal actions of each agent. Unlike VDN and QMIX, QTRAN is not restricted by structural constraints such as non-negative weights, but it performs poorly in complex environments such as StarCraft II. MAVEN~\cite{Mahajan2019MAVENMV} enhances its performance by improving its exploration capabilities. Weighted QMIX~\cite{Rashid2020WeightedQE} takes into account a weighted projection that places greater emphasis on better joint actions. QPLEX~\cite{Wang2020QPLEXDD} takes a duplex dueling network architecture to factorize the joint action-value function and proposes the advantage-based IGM. Nevertheless, the aforementioned methods only loosen the restrictions by proposing IGM variants. More importantly, existing policy-based MARL methods~\cite{Yu2021TheSE, Lowe2017MultiAgentAF, Peng2020FACMACFM} do not explicitly introduce IGM, but they also cannot solve some non-monotonic problems. So how to remove the IGM assumption is still an open and challenging problem. Our proposed DAVE is completely IGM-free and can be applied to existing IGM-based value decomposition methods. More related work will be discussed in Appendix~\ref{sec:more_related_work}.

%%%%%%%%%%%%%%%%%%%%%%%%%%%%%%%%%%%%%%%%%%%%%%%%%%%%%%%%%%%%%%%%%

\section{Preliminaries}

\subsection{Dec-POMDPs}

In this paper, we concentrate on the Dec-POMDP problems. It is an important mathematical model for multi-agent collaborative sequential decision-making problem, in which agents act based on different pieces of information about the environment and maximize the global shared return. Dec-POMDPs can be formulated as a tuple $\langle \mathcal{A}, \mathcal{S}, \mathcal{U},\mathcal{Z}, P, O, r, \gamma \rangle$. $a\in \mathcal{A}:=\{1,\dots,n\}$ represents the agent in the task. $s\in \mathcal{S}$ is the global state of the environment. The joint action set $\mathcal{U}$ is the Cartesian product of action sets ${\mathcal{U}^a}$ of all agents and $\boldsymbol{u}\in\mathcal{U}$ denotes the joint action. Given the state $s$ and the joint action $\boldsymbol{u}$, the state transition function $P: \mathcal{S} \times \mathcal{U} \times \mathcal{S} \rightarrow[0,1]$ returns the probability of the next potential states. All agents receive their observations $\boldsymbol{z}=\{z^1,...,z^n\}$ obtained by the observation function $O: \mathcal{S}\times \mathcal{A} \rightarrow  \mathcal{Z}$.  $r: \mathcal{S} \times \mathcal{U} \rightarrow \mathbb{R}$ denotes the global shared reward function and $\gamma$ is the discount factor. Besides, recent work frequently allows agents to choose actions based on historical trajectory $\tau^a$ rather than local observation $z^a$ in order to alleviate the partially observable problem.

In Dec-POMDP, all agents attempt to maximize the discount return $R_t=\sum_{t=0}^{\infty}\gamma^t r_t$ and the joint state-action value function is defined as:
\begin{equation*}
    Q_{\text{tot}}(s, \boldsymbol{u}) = \mathbb{E}_{\boldsymbol{\pi}}\left[R_t | s, \boldsymbol{u}\right],
\end{equation*}
where $\boldsymbol{\pi}=\{\pi_1,\dots,\pi_n\}$ is the joint policy of all agents.

\subsection{IGM-based Value Decomposition Methods}

The goal of value decomposition method is to enable the application of value-based reinforcement learning methods in the CTDE paradigm. This approach decomposes the joint state-action value function $Q_\text{tot}$ into the utility function $Q_a$ of each agent. In VDN, the joint state-action value function is equal to the sum of all utility functions. And then QMIX maps utility functions to the joint value function using a monotonous mixing network and achieve better performance.\looseness=-1

Nonetheless, additivity and monotonicity still impede the expressive capacity of the neural network model, which may result in the algorithm being unable to handle even for some simple matrix game problems. To this end, QTRAN proposes the Individual Global Max (IGM) principle, which can be represented as:
\begin{equation*}
    \arg \max _{u^a} Q_{\text{tot}}(s, \boldsymbol{u})=\arg \max _{u^a} Q_{a}\left(\tau^{a}, u^{a}\right), \quad \forall a \in \mathcal{A}.
\end{equation*}
The IGM principle requires consistency between local optimal actions and global ones. Most value decomposition methods including VDN and QMIX must follow IGM and its variants. This is because only when the IGM principle holds, the individual action corresponding to the maximum individual action value during decentralized execution is a component of the optimal joint action. Otherwise, the agent cannot select an appropriate action based solely on the individual action-value function.

%%%%%%%%%%%%%%%%%%%%%%%%%%%%%%%%%%%%%%%%%%%%%%%%%%%%%%%%%%%%%%%%%

\section{Methodology}

In this section, we introduce the implementation of the dual self-awareness value decomposition (DAVE) framework in detail. DAVE incorporates an additional ego policy through supervised learning in comparison to the conventional value decomposition method, allowing the agent to identify the optimal individual action without relying on the IGM assumption. Moreover, DAVE can be applied to most IGM-based value decomposition methods and turn them into IGM-free ones.

\subsection{Dual Self-Awareness Framework}

Almost all value decomposition methods follow the IGM assumption, which means that the optimal joint policy is consistent with optimal individual policies. According to IGM, the agent only needs to follow the greedy policy during decentralized execution to cater to the optimal joint policy. Once the IGM assumption is violated, agents cannot select actions that maximize the joint state-action value based solely on local observations. Consequently, the objective of the IGM-free value decomposition method is as follows:
\begin{equation}
    \underset{\boldsymbol{\pi}}{\arg \max}\, Q_\text{tot}(s,\boldsymbol{u}),
\label{eq:original_object}
\end{equation}
where $\boldsymbol{u}=\{u^1, \dots,u^n\}$ and $u^a\sim \pi_a(\tau^a)$. Without the IGM assumption, \equationautorefname~\eqref{eq:original_object} is NP-hard because it cannot be solved and verified in polynomial time~\cite{Hochbaum1996ApproximationAF}. Therefore, in our proposed dual self-awareness framework, each agent has an additional policy network to assist the value function network to find the action corresponding to the optimal joint policy. Inspired by the dual self-awareness theory in psychology, we named these two networks contained in each agent as \emph{ego policy model} and \emph{alter ego value function model}, respectively. \figureautorefname~\ref{fig:framework} depicts the dual self-awareness framework. Note that the IGM-free mixing network refers to the mixing network obtained by removing structural constraints. For instance, the IGM-free QMIX variant does not impose a non-negative constraint on the parameters of the mixing network.\looseness=-1

\begin{figure*}[t]
\centering
\includegraphics[align=c, width=0.98 \linewidth]{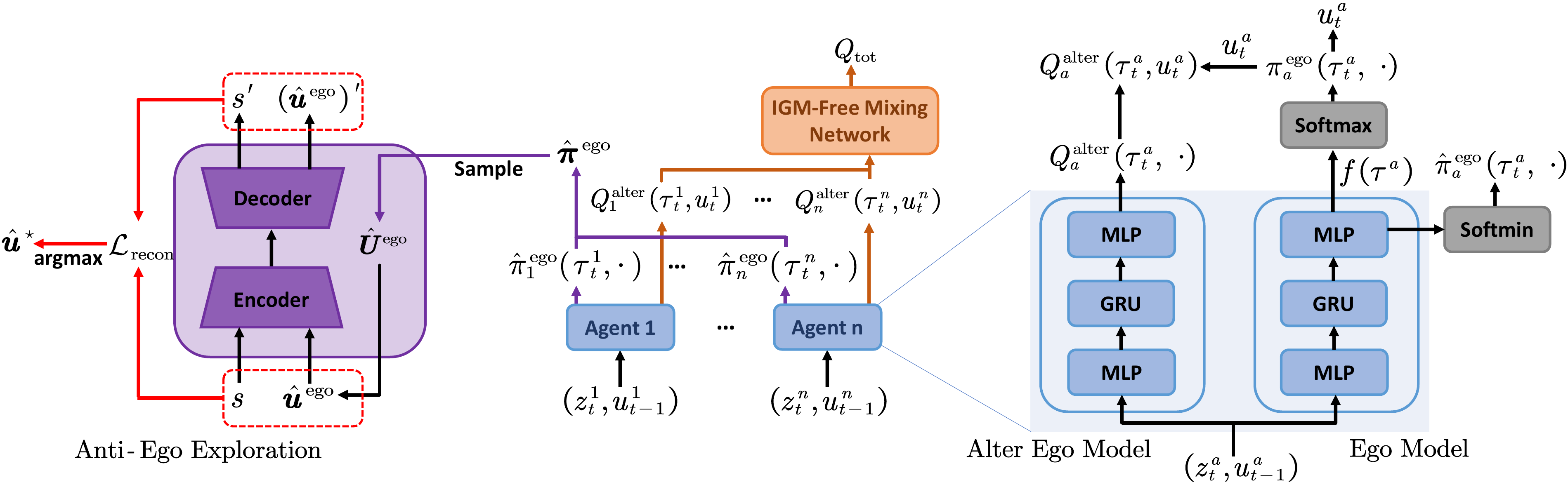}
\vskip -0.05 in
\caption{\textbf{Middle}: The overall dual self-awareness value decomposition framework. \textbf{Right}: Agent network structure. \textbf{Left}: The setup of the anti-ego exploration.}
\label{fig:framework}
\end{figure*}

In DAVE, the alter ego value function model of each agent $a$ is identical to the agent network in other value decomposition methods and generates individual state-action value $Q^\text{alter}_a(\tau^a,\cdot)$ based on the current historical trajectory $\tau^a$. To obtain the joint state-action value $Q_\text{tot}^{\text{alter}}(s, \boldsymbol{u})$ for carrying out the joint action $\boldsymbol{u}$ in the state $s$, the value function of each agent is fed into the IGM-free mixing network. However, the alter ego model does not determine the actions chosen by the agent, either during centralized training or decentralized execution. What determines the action of each agent is the ego policy $\pi^\text{ego}_a$. The joint ego policy can be expressed as $\boldsymbol{\pi}^\text{ego}(\boldsymbol{u}\mid s)=\prod_{a=1}^n\pi^\text{ego}_a(u ^a\mid \tau^a)$. When executing, the agent interacts with the environment by sampling actions from the categorical distribution output by the ego policy. And during training, we replace the exact maximization of $Q_\text{tot}(s, \boldsymbol{u})$ over the total joint action space with maximization over a set of samples $\boldsymbol {U}^{\text{ego}}$ from $\boldsymbol{\pi}^\text{ego}$. So we modify the objective of the cooperative multi-agent value decomposition problem from \equationautorefname~\eqref{eq:original_object} to
\begin{align}
    &\underset{\boldsymbol{\pi}^{\text{ego}}}{\arg\max}\,Q_\text{tot}^{\text{alter}}(s, \boldsymbol{u}^{\text{ego}}),\qquad
\text{s.t.}\quad\boldsymbol{u}^{\text{ego}}\in\boldsymbol{U}^{\text{ego}},
\label{eq:modified_objective}
\end{align}
where $\boldsymbol{U}^{\text{ego}}:=\{\boldsymbol{u}_i^{\text{ego}}\sim \boldsymbol{\pi}^\text{ego} (s)\}_{i=1}^{M}$ in state $s$ and $M$ is the number of samples. Therefore, the dual self-awareness framework substitutes the global optimal state-action value with the local optimal one found in the small space $\boldsymbol{U}^\text{ego}$, which is obtained by sampling $M$ times from $\boldsymbol{\pi}^\text{ego}$. As long as the ego policy assigns non-zero probabilities to all available actions, this method will converge to the objective described by \equationautorefname~\eqref{eq:original_object} as the number of samples $M$ increases. The detailed proof can be found in Appendix~\ref{sec:appendix_proof}. Nevertheless, the computational cost is directly proportional to $M$. So the choice of $M$ must strike a balance between precision and computational complexity.

Next, we introduce the update process of the alter ego value function and the ego policy in the dual self-awareness framework. According to the modified objective of the cooperative multi-agent value decomposition problem illustrated by \equationautorefname~\eqref{eq:modified_objective}, the ego policy needs to be able to predict the individual action that corresponds to the maximal joint state-action value. Therefore, we update the ego policy through supervised learning to increase the probability of the optimal joint action obtained through the sampling procedure. We define $\boldsymbol{u}^\star=\underset{\boldsymbol{u}^\text{ego}}{\arg\max} \,Q_\text{tot}^{\text{alter}} (s, \boldsymbol{u}^\text{ego})$ as the joint action corresponding to the maximal joint state-action value in the samples, where $\boldsymbol{u}^\text{ego }\in \boldsymbol{U}^\text{ego}$. The loss function for the joint ego policy can be written as:
\begin{equation}
    \mathcal{L}_\text{ego}=-\log \boldsymbol{\pi}^\text{ego}(\boldsymbol{u}^\star\mid s).
\label{eq:original_ego_loss}
\end{equation}
The objective function is the negative log-likelihood of the probability of the joint action $\boldsymbol{u}^\star$ and can be further expressed as $-\sum_{a=1}^{n}\log \pi^{\text{ego}}_a({u^{a}}^{\star}\mid \tau^a)$ associated with individual ego policies. Additionally, the alter ego value function $\boldsymbol{Q}^\text{alter}$ is constantly changing during the training process. To prevent the ego policy $\boldsymbol{\pi}^\text{ego}$ from becoming too deterministic and getting trapped in local optimal solutions, we will introduce \emph{a novel exploration method} in the next subsection.

Conventional value decomposition methods can be trained end-to-end by minimizing the following loss:
\begin{equation}
    \mathcal{L}=\left(y_{\text{tot}}-Q_\text{tot}\left(s_t, \boldsymbol{u}_t\right)\right)^2,
\label{eq:q_tot_max}
\end{equation}
where $y_{\text{tot }}=r_t+\gamma \max _{\boldsymbol{u}_{t+1}} Q_\text{tot}^-\left(s_{t+1}, \boldsymbol{u}_{t+1}\right)$ and $Q^-_{\text{tot}}$ is the target joint value function. The loss function described above cannot be directly applied to our proposed dual self-awareness framework because of the max operator. If we also traverse the samples of the joint action in $\boldsymbol{U}^\text{ego}$ to select the local maximum target joint state-action value function, which may result in high variance. Inspired by the Expected SARSA~\cite{Sutton2005ReinforcementLA}, we use the expected value instead of the maximal one. The update rule of $Q^\text{alter}_\text{tot}$ considers the likelihood of each action under the current policy and can eliminate the variance, which can be expressed as:
\begin{equation}
    \mathcal{L}_\text{alter}=(y^\text{alter}_\text{tot} - Q_\text{tot}^{\text{alter}}(s_t, \boldsymbol{u}_t))^2,
\label{eq:alter_loss}
\end{equation}
where $y=r_t+\gamma\mathbb{E}_{\boldsymbol{\pi}^{\text{ego}}}[(Q^\text{alter}_\text{tot})^-(s_{t+1},\boldsymbol{u}_{t+1})]$. However, this method is still computationally complex. So we reuse the action sample set $\boldsymbol{U}^\text{ego}$ and replace $\mathbb{E}_{\boldsymbol{\pi}^{\text{ego}}}$ $[(Q^\text{alter}_\text{tot})^-(s_{t+1},\boldsymbol{u}_{t+1})]$ with $\frac{1}{M}\sum_{i=1}^M (Q^\text{alter}_\text{tot})^-(s_{t+1},\boldsymbol{u}^\text{ego}_i)$, where $\boldsymbol{u}^\text{ego}_i\in \boldsymbol{U}^\text{ego}$ in state $s_{t+1}$. The modified update rule for $Q^\text{alter}_\text{tot}$ reduces the variance and does not increase the order of the computational complexity compared to the original update method shown in \equationautorefname~\eqref{eq:q_tot_max}. 
% Each hypernetwork takes the global state and the actions of all agents as input and generates the weights of the mixing network. 
% It should also be noted that we use the global state and the actions of all agents as input to the hypernetwork that generates weights to enhance the expressiveness of the mixing network.\looseness=-1

In summary, we have constructed the fundamental framework of DAVE. Although DAVE consists of a value network and a policy network, it is \textbf{DISTINCT} from the Actor-Critic framework or other policy-based methods such as MAPPO~\cite{Yu2021TheSE}, MADDPG~\cite{Lowe2017MultiAgentAF} and FACMAC~\cite{Peng2020FACMACFM}. DAVE requires an explicit sampling procedure to update the network model. Besides, the actions sampled from the ego policies remain independent and identically distributed (IID)~\cite{Pfeifer2008ABP}, so the ego policy is trained through \emph{supervised learning} rather than \emph{policy gradient} used in policy-based methods. Most intuitively, conventional policy-based methods cannot solve simple non-monotonic matrix games but DAVE can easily find the optimal solution, as described in Section~\ref{sec:SSMG} and Appendix~\ref{sec:appendix_policybased}.\looseness=-1

\subsection{Anti-Ego Exploration}

As stated previously, updating the ego policy network solely based on \equationautorefname~\eqref{eq:original_ego_loss} with an insufficient sample size $M$ may lead to a local optimum. This is because the state-action value corresponding to the optimal joint action may be grossly underestimated during initialization, making it challenging for the ego policy to choose the optimal joint action. Consequently, the underestimated $Q^\text{alter}_\text{tot}$ will have no opportunity to be corrected, creating a vicious circle. Therefore, we developed an \emph{anti-ego exploration} method based on ego policies. Its core idea is to assess the occurrence frequency of state-action pairs using the reconstruction model based on the auto-encoder~\cite{Hinton2006ReducingTD}. The auto-encoder model often requires the encoder to map the input data to a low-dimensional space, and then reconstructs the original input data based on the obtained low-dimensional representation using the decoder. Our method leverages the fact that the auto-encoder cannot effectively encode novel data and accurately reconstruct it. The more significant the difference between the original state-action pair $(s, \boldsymbol{u})$ and the reconstructed output $(s^\prime,\boldsymbol{u}^\prime)$, the less frequently the agents carry out the joint action $\boldsymbol{u}$ in state $s$. Simultaneously, the difference between the reconstructed state-action pairs and the original ones is also the objective function for updating the auto-encoder, which can be expressed as:
\begin{equation}
    \mathcal{L}_\text{recon}(s, \boldsymbol{u})=\operatorname{MSE}(s,s^\prime)+\sum_{a=1}^{n}\operatorname{CE}(u^a, (u^a)^\prime),
\label{eq:encoder_loss}
\end{equation}
where $\operatorname{MSE}(\cdot)$ and $\operatorname{CE}(\cdot)$ represent the mean square error and the cross-entropy loss function respectively. So we need to encourage agents to choose joint actions with a higher $\mathcal{L}_\text{recon}$. Obviously, it is impractical to traverse all available joint actions $\boldsymbol{u}$. Then we look for the most novel actions within the limited set of actions obtained via the sampling procedure. In order to get state-action pairs that are as uncommon as possible, we generate an anti-ego policy $\hat{\boldsymbol{\pi}}^\text{ego}$ based on $\boldsymbol{\pi}^\text{ego}$, as shown in \figureautorefname~\ref{fig:framework}. The relationship between the ego and anti-ego policy of each agent is as follows:\looseness=-1
\begin{align}
    \pi^\text{ego}_a(\tau^a)&=\operatorname{Softmax}(f(\tau^a)),\nonumber\\
\hat{\pi}^\text{ego}_a(\tau^a)&=\operatorname{Softmin}(f(\tau^a)),
\label{eq:policies}
\end{align}
where $f(\cdot)$ denotes the neural network model of ego policy. \equationautorefname~\eqref{eq:policies} demonstrates that actions with lower probabilities in $\pi^\text{ego}_a$ have higher probabilities in $\hat{\pi}^\text{ego}_a$. Of course, actions with low probabilities in $\pi^\text{ego}_a$ may be non-optimal actions that have been fully explored, so we filter out such actions through the reconstruction model introduced above. $\hat{\boldsymbol{U}}^\text{ego}:=\{\hat{\boldsymbol{u}}_i^{\text{ego}}\sim \hat{\boldsymbol{\pi}}^\text{ego}(s)\}_{i=1}^{M}$ is the joint action set obtained by sampling $M$ times from the anti-ego policy in state $s$. We can identify the most novel joint action $\hat{\boldsymbol{u}}^\star$ in state $s$ by the following equation:
\begin{equation}
\hat{\boldsymbol{u}}^\star=\underset{\hat{\boldsymbol{u}}^\text{ego}}{\arg\max}\,\mathcal{L}_\text{recon}(s,\hat{\boldsymbol{u}}^\text{ego}),
\label{eq:most_novel_action}
\end{equation}
where $\hat{\boldsymbol{u}}^\text{ego} \in \hat{\boldsymbol{U}}^\text{ego}$. Therefore, the objective function in \equationautorefname~\eqref{eq:original_ego_loss} needs to be modified to increase the probability of the joint action $\hat{\boldsymbol{u}}^\star$. The final objective function of the ego policy $\boldsymbol{\pi}^\text{ego}$ can be written as:
\begin{equation}
    \mathcal{L}_\text{ego}=-\big(\log \boldsymbol{\pi}^\text{ego}(\boldsymbol{u}^\star\mid s)+\lambda \log\boldsymbol{\pi}^\text{ego}(\hat{\boldsymbol{u}}^\star\mid s)\big),
\label{eq:modified_ego_policy}
\end{equation}
where the exploration coefficient $\lambda \in [0, 1]$ is annealed linearly during the exploration phase. The choice of hyperparameter $\lambda$ requires a compromise between exploration and exploitation. Note that the dual self-awareness framework and the anti ego exploration method are \textbf{NOT} two separate contributions. The latter is customized for the former, and it is only used to solve the convergence difficulties caused by the explicit sampling procedure in DAVE. By incorporating the anti-ego exploration module, the dual self-awareness framework with low $M$ can also fully explore the action space and significantly avoid becoming stuck in bad local optima. The whole workflow of the anti-ego exploration mechanism is shown in Appendix~\ref{sec:appendix_antiego}.

Our proposed DAVE can be applied to most existing value decomposition methods, as it simply adds an extra ego policy network to the conventional value decomposition framework. We believe that the IGM-free DAVE framework can utilize richer function classes to approximate joint state-action value functions and thus be able to solve the non-monotonic multi-agent cooperation problems.

%%%%%%%%%%%%%%%%%%%%%%%%%%%%%%%%%%%%%%%%%%%%%%%%%%%%%%%%%%%%%%%%%

\section{Experiment}

In this section, we apply the DAVE framework to standard value decomposition methods and evaluate its performance on both didactic problems and complex Dec-POMDPs. First, we will compare the performance of the DAVE variant with QPLEX, Weighted QMIX, QTRAN, FACMAC-nonmonotonic~\cite{Peng2020FACMACFM}, MAVEN, and the vanilla QMIX. Note that most baselines are directly related to IGM. Then the contribution of different modules in the framework and the impact of hyperparameters will be investigated. Except for the new hyperparameters introduced in the DAVE variant, all other hyperparameters are consistent with the original method. 
% If the mixing network in QMIX can remove the restriction of non-negative parameters, it can offer better representational capacity. 
Unless otherwise stated, DAVE refers to the DAVE variant of QMIX, which enhances representational capacity while ensuring fast convergence. For the baseline methods, we obtain their implementation from the released code. The details of the implementation and all environments are provided in Appendix~\ref{sec:appendix_implementation} and Appendix~\ref{sec:appendix_experiment}, respectively.

\subsection{Single-State Matrix Game}
\label{sec:SSMG}

\begin{wrapfigure}[9]{tr}{0.45\linewidth}
    \vskip -0.25 in
	\centering
    \subfigure[Matrix game I.]{
        \includegraphics[align=l, width=0.45 \linewidth]{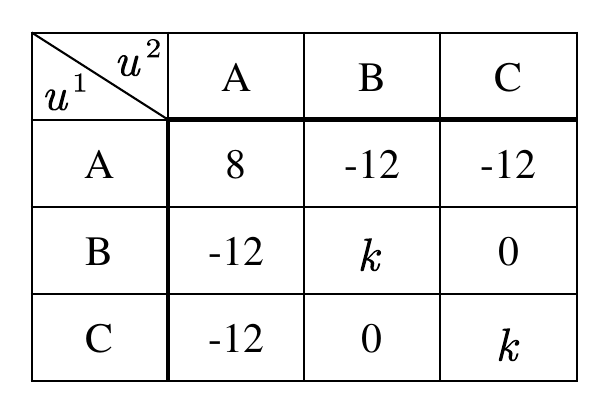}
    }
    \hspace{0.01 \linewidth}
    \subfigure[ Matrix game II.]{
        \includegraphics[align=l, width=0.45 \linewidth]{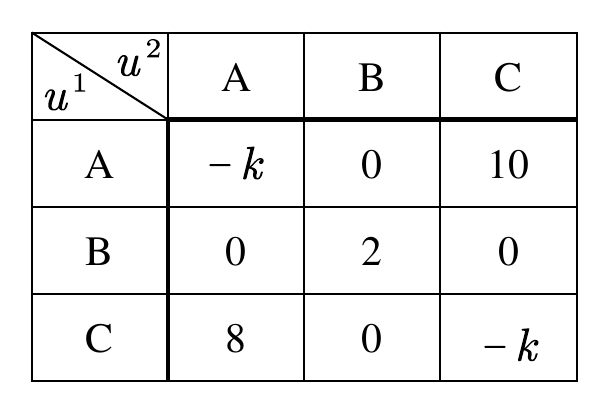}
    }
    % \vskip -0.15 in
    \caption{Payoffs of the two matrix games.}
    \label{fig:matrix_game}
\end{wrapfigure}
Certain types of tasks in multi-agent cooperation problems cannot be handled under the IGM assumptions, such as the two matrix games we will introduce next. These two matrix games capture very simple cooperative multi-agent tasks and are shown in \figureautorefname~\ref{fig:matrix_game}. These payoff matrices of the two-player three-action matrix games describe scenarios that IGM-based value decomposition methods cannot address effectively. The variable $k$ is related to difficulty. In matrix game I with $k\in[0,8)$, agents need to choose the same action to get the reward and the optimal joint action is $(A, A)$. Furthermore, the closer $k$ is to 8, the more challenging it becomes to jump out of the local optimum. Similarly, $k\geq0$ and $(A, C)$ is the optimal joint action in matrix game II. Nevertheless, as $k$ increases, so does the risk of selecting the optimal action $(A, C)$, which leads agents in the IGM-based method to prefer the suboptimal joint action $(B, B)$ corresponding to the maximum expected return. Therefore, both matrix games are challenging for traditional value factorization methods. To disregard the influence of exploration capabilities, we implement various algorithms under uniform visitation, which means that all state-action pairs are explored equally. We also evaluate the performance of popular policy-based MARL algorithms. Results in Appendix~\ref{sec:appendix_policybased} demonstrates that although they do not explicitly introduce IGM, policy-based methods still perform poorly on these two simple tasks.

\begin{figure*}[t]
    \centering
    \includegraphics[width=0.95 \linewidth]{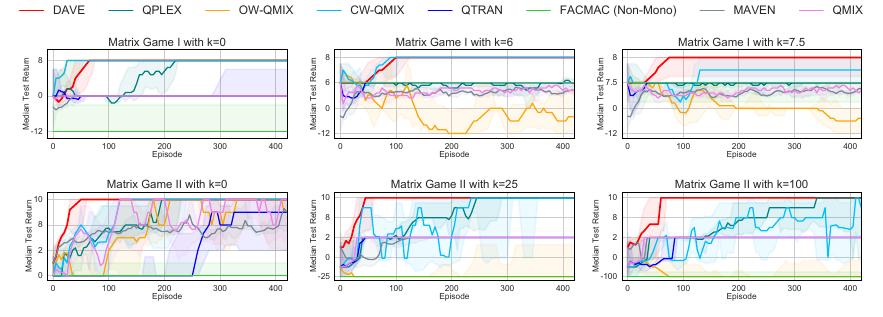}
    \vskip -0.1 in
    \caption{The learning curves of DAVE and other baselines on the matrix games. Note that the ordinates are non-uniform.}
    \label{fig:matrix_results}
\end{figure*}

The results of all algorithms on the two matrix games with varying $k$ are presented in \figureautorefname~\ref{fig:matrix_results}. we limit the number of interactions between agents and the environment to 400 across all environments. All baselines find it challenging to discover the optimal joint action with such limited data. Compared to other algorithms, DAVE can quickly converge to the optimal solution in all situations and exhibits robust performance, showing little variation as task difficulty increases. CW-QMIX can converge to the optimal joint solution in all cases except matrix game I with $k=7.5$. QPLEX performs better in matrix game II than in matrix game I, likely due to its sensitivity to the penalty of non-cooperation. In addition, as the task difficulty increases, both CW-QMIX and QPLEX show a significant drop in performance. Although QTRAN can converge to the optimal solution in matrix game I with $k=0$, it may require more interactive samples for further learning. The remaining algorithms can converge to the optimal joint action $(A, C)$ or the suboptimal joint action $(C, A)$ in matrix game II with $k=0$, but in the more difficult matrix game II can only converge to the joint action $(B, B)$. In summary, DAVE can easily solve the above single-state matrix games and outperforms other IGM-based variants and the basic algorithm QMIX by a significant margin.

\subsection{Multi-Step Matrix Game}

\begin{wrapfigure}[13]{r}{0.55\linewidth}
    \vskip -0.23 in
	\centering
    \subfigure{
        \includegraphics[align=c, width=0.43 \linewidth]{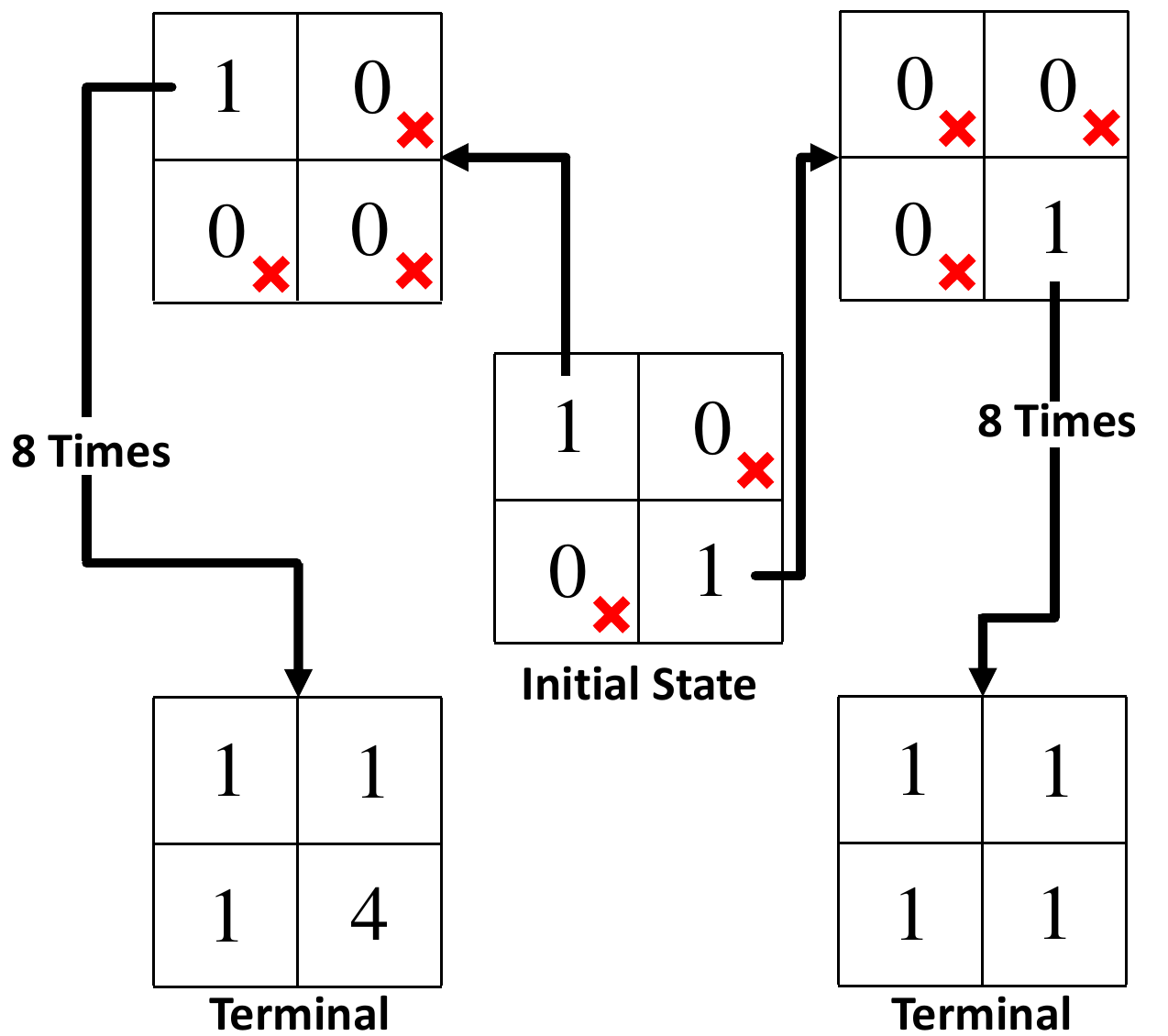}
    }
    \hspace{0.01 \linewidth}
    \subfigure{
        \includegraphics[align=c, width=0.48 \linewidth]{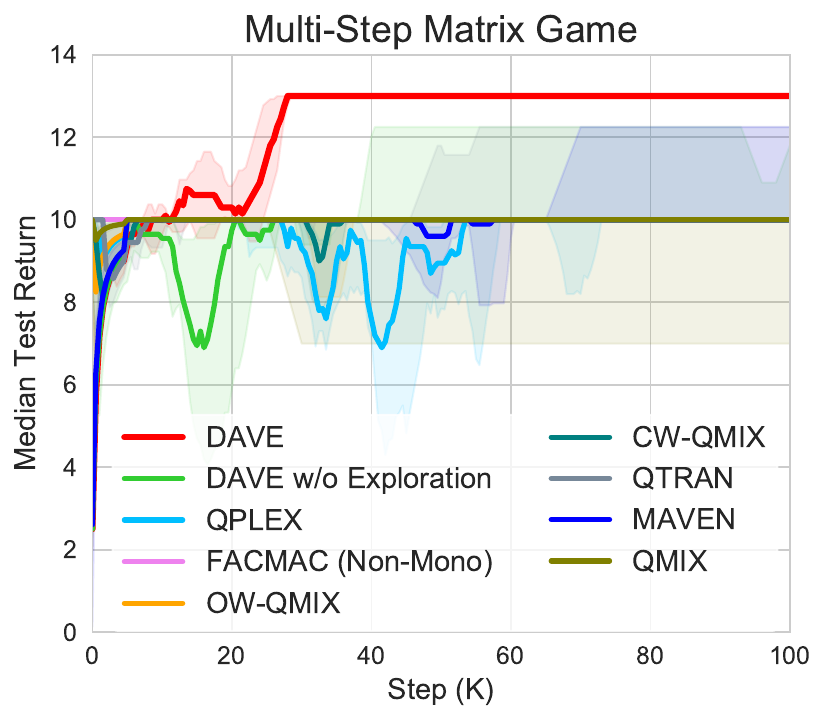}
    }
    \caption{\textbf{Left}: Illustrations of the multi-step matrix game. \textbf{Right}: Performance over time on the multi-step matrix game.}
    \label{fig:n_step_matrix_game}
\end{wrapfigure}

Next, we evaluate the performance of each algorithm in the multi-step matrix game. The environment involves two agents, each with two actions. The initial state is non-monotonic, and agents terminate immediately after receiving a zero reward. If the game is not terminated after the first step, agents need to continue to repeat the last joint action to obtain a non-zero reward. According to \figureautorefname~\ref{fig:n_step_matrix_game}, agents repeat either the upper-left joint action or the lower-right joint action eight times, resulting in two distinct terminal states. The optimal policy is for agents to repeat the upper-left joint action in the initial state and choose the lower-right joint action in the terminal state, which yields the highest return of 13. Local observations of agents in this environment include the current step count and the last joint action. The protracted decision-making process requires agents to fully explore the state-action space to find the optimal policy. Therefore, this task can also test the exploration capacity of the algorithm.

\begin{figure*}[t]
    \centering
    \includegraphics[width=0.95 \linewidth]{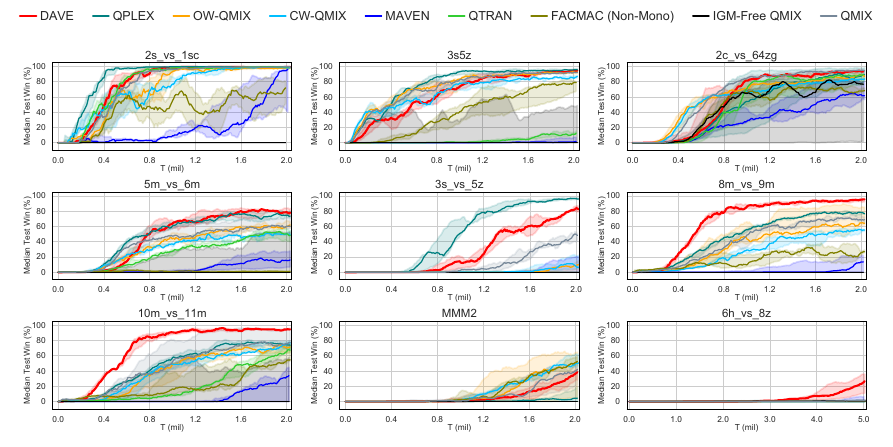}
    \vskip -0.1 in
    \caption{Performance comparison with baselines in different scenarios.}
    \label{fig:smac_results}
\end{figure*}

\figureautorefname~\ref{fig:n_step_matrix_game} shows the performance of DAVE and other baselines on the multi-step matrix game. Only DAVE successfully learned the optimal policy, while most other algorithms were trapped in the suboptimal policy with payoff 10. QPLEX can occasionally converge to the optimal policy, but in most cases it still cannot learn the optimal solution. Besides, to examine the contribution of the anti-ego exploration to the DAVE framework, we implemented DAVE without exploration in this environment. The experimental results show that DAVE without anti-ego exploration still performs better than other baselines, but is significantly inferior to DAVE. Therefore, for some complex tasks, the anti-ego exploration mechanism can assist the agent in exploring novel state-action pairs and avoiding being restricted to suboptimal solutions.

\subsection{StarCraft II}

SMAC~\cite{Samvelyan2019TheSM} is the most popular experimental platform for multi-agent micromanagement, based on the famous real-time strategy game StarCraft II. SMAC offers a diverse range of mini-scenarios, including homogeneous, heterogeneous, symmetric, and asymmetric problems. The goal of each task is to direct agents to defeat enemy units controlled by built-in heuristic AI. Agents can only access information within their own visual range and share a reward function. So we assess the ability of DAVE to solve complex multi-agent cooperation problems in this environment.

The version of StarCraft II used in this paper is 4.10 and performance is not comparable between versions. We test the proposed method in different scenarios. In addition to the above-mentioned methods, baselines also include IGM-Free QMIX that simply does not take the absolute value of the weight of the mixing network and keeps others unchanged. Both IGM-Free QMIX and FACMAC-nonmonotonic are IGM-free methods, but do not have convergence guarantees. The performance comparison between above two IGM-free methods and DAVE that is also IGM-free can reflect the necessity of our proposed ego policy network. In addition, to balance exploration and exploitation, we disable the anti-ego exploration mechanism in some easy scenarios and use it only in hard scenarios.

The experimental results are shown in \figureautorefname~\ref{fig:smac_results}. Undoubtedly, IGM-Free QMIX and FACMAC-nonmonotonic fail to solve almost all tasks. Without the IGM assumption, QMIX is unable to find the correct update direction. Therefore, the assistance of the ego policy network in the DAVE framework is essential for the convergence of IGM-free algorithms. QPLEX performs best in some easy scenarios but fails to solve the task in most hard scenarios. Weighted QMIX performs well in the \emph{MMM2} scenario but has mediocre performance in most other scenarios. Although DAVE is not the best in easy maps, it still shows competitive performance. In hard scenarios, such as \emph{5m\_vs\_6m}, \emph{8m\_vs\_9m} and \emph{10m\_vs\_11m}, DAVE performs significantly better than other algorithms. We believe that it may be that the characteristics in these scenarios are far from the IGM assumptions, resulting in low sample efficiency of conventional value decomposition methods. Moreover, because DAVE is IGM-free, it can guide the agent to learn the excellent policy quickly without restriction. Especially in the \emph{6h\_vs\_8z} scenario, only DAVE got a non-zero median test win ratio. In summary, even if the IGM principle is completely abandoned, DAVE can still successfully solve complex tasks and even outperforms existing IGM-based value decomposition algorithms in certain challenging scenarios. \textbf{MORE} experimental results on other complex environments such as SMACv2~\cite{Ellis2022SMACv2AI} and Multi-Agent MuJoCo~\cite{Peng2020FACMACFM} can be found in Appendix~\ref{sec:appendix_experiment}. 
These results demonstrate the excellent generalization of DAVE to different algorithms and environments.

\subsection{Ablation Study}

Finally, we evaluate the performance of DAVE under different hyperparameter settings in SMAC scenarios. We focus on the number of samples $M$ and the initial value of the exploration coefficient $\lambda_\text{init}$. According to Appendix~\ref{sec:appendix_proof}, increasing the number of samples $M$ improves the probability of drawing the optimal joint action.  Nonetheless, due to time and space complexity constraints, it is crucial to find an appropriate value for the hyperparameter $M$. Then we test DAVE with different $M$ values in three scenarios and present results in \figureautorefname~\ref{fig:sample_ablation}. The results indicate that a larger $M$ leads to a faster learning speed for DAVE, which aligns with our intuition and proof. DAVE with $M=1$ fails in all scenarios. When $M$ increases to 10, the performance of DAVE is already considerable. However, as $M$ increases further, the performance enhancement it provides diminishes. 

\begin{figure*}[t]
    \centering
    \includegraphics[width=0.95 \linewidth]{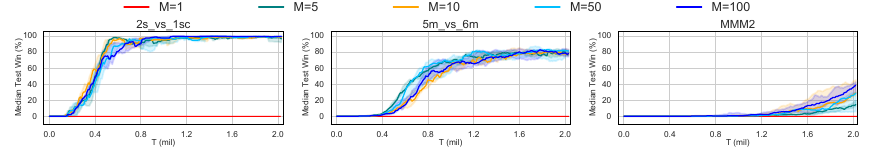}
    \vskip -0.1 in
    \caption{Influence of the sample size $M$ for DAVE.}
    \label{fig:sample_ablation}
\end{figure*}
\begin{figure*}[t]
    \centering
    \includegraphics[width=0.95 \linewidth]{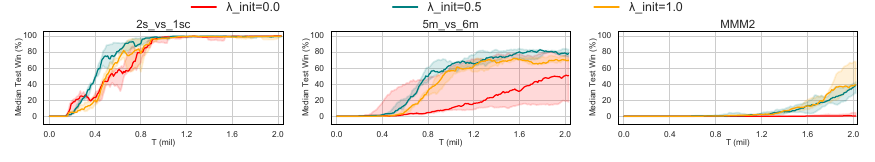}
    \vskip -0.1 in
    \caption{Results of DAVE with different $\lambda_\text{init}$. The action space for the three scenarios gradually increases from left to right.}
    \label{fig:exploration_ablation}
\end{figure*}

The equilibrium between exploration and exploitation is directly related to the initial value of the exploration coefficient $\lambda_\text{init}$. The final value $\lambda_\text{end}$ and the annealing period also affect the exploration, but only $\lambda_\text{init}$ is discussed in this paper. $\lambda_\text{end}$ is generally set to 0, so DAVE is only affected by anti-ego exploration in the early stages of training. \figureautorefname~\ref{fig:exploration_ablation} shows the learning curve of DAVE under different $\lambda_\text{init}$. $\lambda_\text{init}=0$ means that DAVE does not perform anti-ego exploration. The addition of anti-ego exploration in easy scenarios will prevent the algorithm from exploiting better in the early stages, which will significantly slow down the training speed. Conversely, DAVE with high $\lambda_\text{init}$ values can achieve superior performance in hard scenarios. 
% Although DAVE with a non-zero $\lambda_\text{init}$ performs poorly at the beginning in the easy map, its subsequent performance is comparable to that of DAVE without exploration. 
So a moderately high $\lambda_\text{init}$ is beneficial for DAVE to visit key state-action pairs and converge to the optimal solution. In summary, appropriate values of $M$ and $\lambda_\text{init}$ are indispensable for DAVE.

%%%%%%%%%%%%%%%%%%%%%%%%%%%%%%%%%%%%%%%%%%%%%%%%%%%%%%%%%%%%%%%%%

\section{Conclusion and Discussion}

Inspired by the concept of dual self-awareness in psychology, we propose the IGM-free value decomposition framework DAVE in this paper. By dividing the agent network into two parts, ego policy and alter ego value function, DAVE is the first value decomposition method that completely abandons the IGM assumption. We find that the larger the number of samples in the explicit search procedure, the better the algorithm converges to the global optimum, but this will lead to high computational complexity. Therefore, we propose the anti-ego exploration mechanism, which can effectively prevent DAVE from falling into local optima, especially in problems with large action spaces. Experimental results in some didactic problems and complex environments show that DAVE outperforms existing IGM-based value decomposition methods. We believe that DAVE provides a new perspective on addressing the problems posed by the IGM assumption in value decomposition methods. 
In our future research, we intend to concentrate on how to choose the exploration coefficients in an adaptive manner. Further study of the issue would be of interest.

%%%%%%%%%%%%%%%%%%%%%%%%%%%%%%%%%%%%%%%%%%%%%%%%%%%%%%%%%%%%%%%%%

% \begin{ack}
% Use unnumbered first level headings for the acknowledgments. All acknowledgments
% go at the end of the paper before the list of references. Moreover, you are required to declare
% funding (financial activities supporting the submitted work) and competing interests (related financial activities outside the submitted work).
% More information about this disclosure can be found at: \url{https://neurips.cc/Conferences/2023/PaperInformation/FundingDisclosure}.

% Do {\bf not} include this section in the anonymized submission, only in the final paper. You can use the \texttt{ack} environment provided in the style file to autmoatically hide this section in the anonymized submission.
% \end{ack}

\medskip

{
\small
\bibliography{main.bbl}

\begin{thebibliography}{38}
\providecommand{\natexlab}[1]{#1}
\providecommand{\url}[1]{\texttt{#1}}
\expandafter\ifx\csname urlstyle\endcsname\relax
  \providecommand{\doi}[1]{doi: #1}\else
  \providecommand{\doi}{doi: \begingroup \urlstyle{rm}\Url}\fi

\bibitem[Berner et~al.(2019)Berner, Brockman, Chan, Cheung, Debiak, Dennison,
  Farhi, Fischer, Hashme, Hesse, J{\'o}zefowicz, Gray, Olsson, Pachocki,
  Petrov, de~Oliveira~Pinto, Raiman, Salimans, Schlatter, Schneider, Sidor,
  Sutskever, Tang, Wolski, and Zhang]{Berner2019Dota2W}
Christopher Berner, Greg Brockman, Brooke Chan, Vicki Cheung, Przemyslaw
  Debiak, Christy Dennison, David Farhi, Quirin Fischer, Shariq Hashme,
  Christopher Hesse, Rafal J{\'o}zefowicz, Scott Gray, Catherine Olsson,
  Jakub~W. Pachocki, Michael Petrov, Henrique~Pond{\'e} de~Oliveira~Pinto,
  Jonathan Raiman, Tim Salimans, Jeremy Schlatter, Jonas Schneider, Szymon
  Sidor, Ilya Sutskever, Jie Tang, Filip Wolski, and Susan Zhang.
\newblock Dota 2 with large scale deep reinforcement learning.
\newblock \emph{ArXiv}, abs/1912.06680, 2019.

\bibitem[Boehmer et~al.(2020)Boehmer, Kurin, and Whiteson]{Bhmer2019DeepCG}
Wendelin Boehmer, Vitaly Kurin, and Shimon Whiteson.
\newblock Deep coordination graphs.
\newblock In \emph{Proceedings of the 37th International Conference on Machine
  Learning, {ICML} 2020, 13-18 July 2020, Virtual Event}, volume 119 of
  \emph{Proceedings of Machine Learning Research}, pages 980--991. {PMLR},
  2020.

\bibitem[Buber and Kaufmann()]{BuberIAT}
Martin Buber and Walter~Arnold Kaufmann.
\newblock I and thou.

\bibitem[Dimakopoulou and Roy(2018)]{Dimakopoulou2018CoordinatedEI}
Maria Dimakopoulou and Benjamin~Van Roy.
\newblock Coordinated exploration in concurrent reinforcement learning.
\newblock In Jennifer~G. Dy and Andreas Krause, editors, \emph{Proceedings of
  the 35th International Conference on Machine Learning, {ICML} 2018,
  Stockholmsm{\"{a}}ssan, Stockholm, Sweden, July 10-15, 2018}, volume~80 of
  \emph{Proceedings of Machine Learning Research}, pages 1270--1278. {PMLR},
  2018.

\bibitem[Duval and Wicklund(1972)]{Duval1972ATO}
Shelley Duval and Robert~A. Wicklund.
\newblock A theory of objective self awareness.
\newblock 1972.

\bibitem[Ellis et~al.(2022)Ellis, Moalla, Samvelyan, Sun, Mahajan, Foerster,
  and Whiteson]{Ellis2022SMACv2AI}
Benjamin Ellis, S.~Moalla, Mikayel Samvelyan, Mingfei Sun, Anuj Mahajan,
  Jakob~Nicolaus Foerster, and Shimon Whiteson.
\newblock Smacv2: An improved benchmark for cooperative multi-agent
  reinforcement learning.
\newblock \emph{ArXiv}, abs/2212.07489, 2022.

\bibitem[Foerster et~al.(2016)Foerster, Assael, de~Freitas, and
  Whiteson]{Foerster2016LearningTC}
Jakob~N. Foerster, Yannis~M. Assael, Nando de~Freitas, and Shimon Whiteson.
\newblock Learning to communicate with deep multi-agent reinforcement learning.
\newblock In Daniel~D. Lee, Masashi Sugiyama, Ulrike von Luxburg, Isabelle
  Guyon, and Roman Garnett, editors, \emph{Advances in Neural Information
  Processing Systems 29: Annual Conference on Neural Information Processing
  Systems 2016, December 5-10, 2016, Barcelona, Spain}, pages 2137--2145, 2016.

\bibitem[Foerster et~al.(2018)Foerster, Farquhar, Afouras, Nardelli, and
  Whiteson]{Foerster2017CounterfactualMP}
Jakob~N. Foerster, Gregory Farquhar, Triantafyllos Afouras, Nantas Nardelli,
  and Shimon Whiteson.
\newblock Counterfactual multi-agent policy gradients.
\newblock In Sheila~A. McIlraith and Kilian~Q. Weinberger, editors,
  \emph{Proceedings of the Thirty-Second {AAAI} Conference on Artificial
  Intelligence, (AAAI-18), the 30th innovative Applications of Artificial
  Intelligence (IAAI-18), and the 8th {AAAI} Symposium on Educational Advances
  in Artificial Intelligence (EAAI-18), New Orleans, Louisiana, USA, February
  2-7, 2018}, pages 2974--2982. {AAAI} Press, 2018.

\bibitem[Freud(1953)]{Freud1953TheSE}
Sigmund Freud.
\newblock The standard edition of the complete psychological works of sigmund
  freud.
\newblock 1953.

\bibitem[Gupta et~al.(2021)Gupta, Mahajan, Peng, Boehmer, and
  Whiteson]{Gupta2020UneVEnUV}
Tarun Gupta, Anuj Mahajan, Bei Peng, Wendelin Boehmer, and Shimon Whiteson.
\newblock Uneven: Universal value exploration for multi-agent reinforcement
  learning.
\newblock In Marina Meila and Tong Zhang, editors, \emph{Proceedings of the
  38th International Conference on Machine Learning, {ICML} 2021, 18-24 July
  2021, Virtual Event}, volume 139 of \emph{Proceedings of Machine Learning
  Research}, pages 3930--3941. {PMLR}, 2021.

\bibitem[Ha et~al.(2017)Ha, Dai, and Le]{Ha2016HyperNetworks}
David Ha, Andrew~M. Dai, and Quoc~V. Le.
\newblock Hypernetworks.
\newblock In \emph{5th International Conference on Learning Representations,
  {ICLR} 2017, Toulon, France, April 24-26, 2017, Conference Track
  Proceedings}. OpenReview.net, 2017.

\bibitem[Haule(2010)]{Haule2010JungIT}
John~Ryan Haule.
\newblock Jung in the 21st century volume two: Synchronicity and science.
\newblock 2010.

\bibitem[Hinton and Salakhutdinov(2006)]{Hinton2006ReducingTD}
Geoffrey~E. Hinton and Ruslan Salakhutdinov.
\newblock Reducing the dimensionality of data with neural networks.
\newblock \emph{Science}, 313:\penalty0 504 -- 507, 2006.

\bibitem[Hochbaum(1996)]{Hochbaum1996ApproximationAF}
Dorit~S. Hochbaum.
\newblock Approximation algorithms for np-hard problems.
\newblock 1996.

\bibitem[Hu et~al.(2021)Hu, Jiang, Harding, Wu, and wei Liao]{Hu2021revisiting}
Jian Hu, Siyang Jiang, Seth~Austin Harding, Haibin Wu, and Shih wei Liao.
\newblock Revisiting the monotonicity constraint in cooperative multi-agent
  reinforcement learning.
\newblock 2021.

\bibitem[Iqbal and Sha(2019)]{Iqbal2018ActorAttentionCriticFM}
Shariq Iqbal and Fei Sha.
\newblock Actor-attention-critic for multi-agent reinforcement learning.
\newblock In Kamalika Chaudhuri and Ruslan Salakhutdinov, editors,
  \emph{Proceedings of the 36th International Conference on Machine Learning,
  {ICML} 2019, 9-15 June 2019, Long Beach, California, {USA}}, volume~97 of
  \emph{Proceedings of Machine Learning Research}, pages 2961--2970. {PMLR},
  2019.

\bibitem[Liu et~al.(2021)Liu, Jain, Yeh, and Schwing]{Liu2021CooperativeEF}
Iou{-}Jen Liu, Unnat Jain, Raymond~A. Yeh, and Alexander~G. Schwing.
\newblock Cooperative exploration for multi-agent deep reinforcement learning.
\newblock In Marina Meila and Tong Zhang, editors, \emph{Proceedings of the
  38th International Conference on Machine Learning, {ICML} 2021, 18-24 July
  2021, Virtual Event}, volume 139 of \emph{Proceedings of Machine Learning
  Research}, pages 6826--6836. {PMLR}, 2021.

\bibitem[Lowe et~al.(2017)Lowe, Wu, Tamar, Harb, Abbeel, and
  Mordatch]{Lowe2017MultiAgentAF}
Ryan Lowe, Yi~Wu, Aviv Tamar, Jean Harb, Pieter Abbeel, and Igor Mordatch.
\newblock Multi-agent actor-critic for mixed cooperative-competitive
  environments.
\newblock In Isabelle Guyon, Ulrike von Luxburg, Samy Bengio, Hanna~M. Wallach,
  Rob Fergus, S.~V.~N. Vishwanathan, and Roman Garnett, editors, \emph{Advances
  in Neural Information Processing Systems 30: Annual Conference on Neural
  Information Processing Systems 2017, December 4-9, 2017, Long Beach, CA,
  {USA}}, pages 6379--6390, 2017.

\bibitem[Mahajan et~al.(2019)Mahajan, Rashid, Samvelyan, and
  Whiteson]{Mahajan2019MAVENMV}
Anuj Mahajan, Tabish Rashid, Mikayel Samvelyan, and Shimon Whiteson.
\newblock {MAVEN:} multi-agent variational exploration.
\newblock In Hanna~M. Wallach, Hugo Larochelle, Alina Beygelzimer, Florence
  d'Alch{\'{e}}{-}Buc, Emily~B. Fox, and Roman Garnett, editors, \emph{Advances
  in Neural Information Processing Systems 32: Annual Conference on Neural
  Information Processing Systems 2019, NeurIPS 2019, December 8-14, 2019,
  Vancouver, BC, Canada}, pages 7611--7622, 2019.

\bibitem[Mahajan et~al.(2021)Mahajan, Samvelyan, Mao, Makoviychuk, Garg,
  Kossaifi, Whiteson, Zhu, and Anandkumar]{Mahajan2021TesseractTA}
Anuj Mahajan, Mikayel Samvelyan, Lei Mao, Viktor Makoviychuk, Animesh Garg,
  Jean Kossaifi, Shimon Whiteson, Yuke Zhu, and Animashree Anandkumar.
\newblock Tesseract: Tensorised actors for multi-agent reinforcement learning.
\newblock In Marina Meila and Tong Zhang, editors, \emph{Proceedings of the
  38th International Conference on Machine Learning, {ICML} 2021, 18-24 July
  2021, Virtual Event}, volume 139 of \emph{Proceedings of Machine Learning
  Research}, pages 7301--7312. {PMLR}, 2021.

\bibitem[Oliehoek and Amato(2016)]{Oliehoek2016ACI}
F.~Oliehoek and Chris Amato.
\newblock A concise introduction to decentralized pomdps.
\newblock In \emph{SpringerBriefs in Intelligent Systems}, 2016.

\bibitem[Pedersen(1994)]{Pedersen1994CameralAA}
David~L. Pedersen.
\newblock Cameral analysis: A method of treating the psychoneuroses using
  hypnosis.
\newblock 1994.

\bibitem[Peng et~al.(2021)Peng, Rashid, de~Witt, Kamienny, Torr, Boehmer, and
  Whiteson]{Peng2020FACMACFM}
Bei Peng, Tabish Rashid, Christian~Schr{\"{o}}der de~Witt, Pierre{-}Alexandre
  Kamienny, Philip H.~S. Torr, Wendelin Boehmer, and Shimon Whiteson.
\newblock {FACMAC:} factored multi-agent centralised policy gradients.
\newblock In Marc'Aurelio Ranzato, Alina Beygelzimer, Yann~N. Dauphin, Percy
  Liang, and Jennifer~Wortman Vaughan, editors, \emph{Advances in Neural
  Information Processing Systems 34: Annual Conference on Neural Information
  Processing Systems 2021, NeurIPS 2021, December 6-14, 2021, virtual}, pages
  12208--12221, 2021.

\bibitem[Pfeifer(2008)]{Pfeifer2008ABP}
Phillip~E. Pfeifer.
\newblock A brief primer on probability distributions.
\newblock \emph{Economics Educator: Courses}, 2008.

\bibitem[Rashid et~al.(2018)Rashid, Samvelyan, de~Witt, Farquhar, Foerster, and
  Whiteson]{Rashid2018QMIXMV}
Tabish Rashid, Mikayel Samvelyan, Christian~Schr{\"{o}}der de~Witt, Gregory
  Farquhar, Jakob~N. Foerster, and Shimon Whiteson.
\newblock {QMIX:} monotonic value function factorisation for deep multi-agent
  reinforcement learning.
\newblock In Jennifer~G. Dy and Andreas Krause, editors, \emph{Proceedings of
  the 35th International Conference on Machine Learning, {ICML} 2018,
  Stockholmsm{\"{a}}ssan, Stockholm, Sweden, July 10-15, 2018}, volume~80 of
  \emph{Proceedings of Machine Learning Research}, pages 4292--4301. {PMLR},
  2018.

\bibitem[Rashid et~al.(2020)Rashid, Farquhar, Peng, and
  Whiteson]{Rashid2020WeightedQE}
Tabish Rashid, Gregory Farquhar, Bei Peng, and Shimon Whiteson.
\newblock Weighted {QMIX:} expanding monotonic value function factorisation for
  deep multi-agent reinforcement learning.
\newblock In Hugo Larochelle, Marc'Aurelio Ranzato, Raia Hadsell,
  Maria{-}Florina Balcan, and Hsuan{-}Tien Lin, editors, \emph{Advances in
  Neural Information Processing Systems 33: Annual Conference on Neural
  Information Processing Systems 2020, NeurIPS 2020, December 6-12, 2020,
  virtual}, 2020.

\bibitem[Samvelyan et~al.(2019)Samvelyan, Rashid, Witt, Farquhar, Nardelli,
  Rudner, Hung, Torr, Foerster, and Whiteson]{Samvelyan2019TheSM}
Mikayel Samvelyan, Tabish Rashid, C.~S.~D. Witt, Gregory Farquhar, Nantas
  Nardelli, Tim G.~J. Rudner, Chia-Man Hung, Philip H.~S. Torr, Jakob~N.
  Foerster, and Shimon Whiteson.
\newblock The starcraft multi-agent challenge.
\newblock \emph{ArXiv}, abs/1902.04043, 2019.

\bibitem[Son et~al.(2019)Son, Kim, Kang, Hostallero, and Yi]{Son2019QTRANLT}
Kyunghwan Son, Daewoo Kim, Wan~Ju Kang, David Hostallero, and Yung Yi.
\newblock {QTRAN:} learning to factorize with transformation for cooperative
  multi-agent reinforcement learning.
\newblock In Kamalika Chaudhuri and Ruslan Salakhutdinov, editors,
  \emph{Proceedings of the 36th International Conference on Machine Learning,
  {ICML} 2019, 9-15 June 2019, Long Beach, California, {USA}}, volume~97 of
  \emph{Proceedings of Machine Learning Research}, pages 5887--5896. {PMLR},
  2019.

\bibitem[Son et~al.(2020)Son, Ahn, Reyes, Shin, and Yi]{Son2020QTRANIV}
Kyunghwan Son, Sungsoo Ahn, Roben~Delos Reyes, Jinwoo Shin, and Yung Yi.
\newblock Qtran++: Improved value transformation for cooperative multi-agent
  reinforcement learning.
\newblock \emph{arXiv: Learning}, 2020.

\bibitem[Sukhbaatar et~al.(2016)Sukhbaatar, Szlam, and
  Fergus]{Sukhbaatar2016LearningMC}
Sainbayar Sukhbaatar, Arthur Szlam, and Rob Fergus.
\newblock Learning multiagent communication with backpropagation.
\newblock In Daniel~D. Lee, Masashi Sugiyama, Ulrike von Luxburg, Isabelle
  Guyon, and Roman Garnett, editors, \emph{Advances in Neural Information
  Processing Systems 29: Annual Conference on Neural Information Processing
  Systems 2016, December 5-10, 2016, Barcelona, Spain}, pages 2244--2252, 2016.

\bibitem[Sun et~al.(2021)Sun, Lee, and Lee]{Sun2021DFACFF}
Wei{-}Fang Sun, Cheng{-}Kuang Lee, and Chun{-}Yi Lee.
\newblock {DFAC} framework: Factorizing the value function via quantile mixture
  for multi-agent distributional q-learning.
\newblock In Marina Meila and Tong Zhang, editors, \emph{Proceedings of the
  38th International Conference on Machine Learning, {ICML} 2021, 18-24 July
  2021, Virtual Event}, volume 139 of \emph{Proceedings of Machine Learning
  Research}, pages 9945--9954. {PMLR}, 2021.

\bibitem[Sunehag et~al.(2017)Sunehag, Lever, Gruslys, Czarnecki, Zambaldi,
  Jaderberg, Lanctot, Sonnerat, Leibo, Tuyls, and
  Graepel]{Sunehag2017ValueDecompositionNF}
Peter Sunehag, Guy Lever, Audrunas Gruslys, Wojciech~M. Czarnecki,
  Vin{\'i}cius~Flores Zambaldi, Max Jaderberg, Marc Lanctot, Nicolas Sonnerat,
  Joel~Z. Leibo, Karl Tuyls, and Thore Graepel.
\newblock Value-decomposition networks for cooperative multi-agent learning.
\newblock In \emph{Adaptive Agents and Multi-Agent Systems}, 2017.

\bibitem[Sutton and Barto(2005)]{Sutton2005ReinforcementLA}
Richard~S. Sutton and Andrew~G. Barto.
\newblock Reinforcement learning: An introduction.
\newblock \emph{IEEE Transactions on Neural Networks}, 16:\penalty0 285--286,
  2005.

\bibitem[Wang et~al.(2021{\natexlab{a}})Wang, Ren, Liu, Yu, and
  Zhang]{Wang2020QPLEXDD}
Jianhao Wang, Zhizhou Ren, Terry Liu, Yang Yu, and Chongjie Zhang.
\newblock {QPLEX:} duplex dueling multi-agent q-learning.
\newblock In \emph{9th International Conference on Learning Representations,
  {ICLR} 2021, Virtual Event, Austria, May 3-7, 2021}. OpenReview.net,
  2021{\natexlab{a}}.

\bibitem[Wang et~al.(2021{\natexlab{b}})Wang, Xu, Gu, Song, and
  Green]{Wang2021MultiAgentRL}
Jianhong Wang, Wangkun Xu, Yunjie Gu, Wenbin Song, and Tim~C. Green.
\newblock Multi-agent reinforcement learning for active voltage control on
  power distribution networks.
\newblock In Marc'Aurelio Ranzato, Alina Beygelzimer, Yann~N. Dauphin, Percy
  Liang, and Jennifer~Wortman Vaughan, editors, \emph{Advances in Neural
  Information Processing Systems 34: Annual Conference on Neural Information
  Processing Systems 2021, NeurIPS 2021, December 6-14, 2021, virtual}, pages
  3271--3284, 2021{\natexlab{b}}.

\bibitem[Wang et~al.(2020)Wang, Wang, Wu, and Zhang]{Wang2019InfluenceBasedME}
Tonghan Wang, Jianhao Wang, Yi~Wu, and Chongjie Zhang.
\newblock Influence-based multi-agent exploration.
\newblock In \emph{8th International Conference on Learning Representations,
  {ICLR} 2020, Addis Ababa, Ethiopia, April 26-30, 2020}. OpenReview.net, 2020.

\bibitem[Yu et~al.(2021)Yu, Velu, Vinitsky, Wang, Bayen, and Wu]{Yu2021TheSE}
Chao Yu, Akash Velu, Eugene Vinitsky, Yu~Wang, Alexandre~M. Bayen, and Yi~Wu.
\newblock The surprising effectiveness of mappo in cooperative, multi-agent
  games.
\newblock \emph{ArXiv}, abs/2103.01955, 2021.

\bibitem[Zhang et~al.(2023)Zhang, Li, Xu, Li, and Fan]{Zhang2023InducingSE}
Bin Zhang, Lijuan Li, Zhiwei Xu, Dapeng Li, and Guoliang Fan.
\newblock Inducing stackelberg equilibrium through spatio-temporal sequential
  decision-making in multi-agent reinforcement learning.
\newblock \emph{ArXiv}, abs/2304.10351, 2023.

\end{thebibliography}
\bibliographystyle{plainnat}
}

%%%%%%%%%%%%%%%%%%%%%%%%%%%%%%%%%%%%%%%%%%%%%%%%%%%%%%%%%%%%

\newpage
\appendix

\section{Proof}
\label{sec:appendix_proof}

\begin{proposition}
As long as the ego policy assigns non-zero probabilities to all actions, this method will approach the objective described by \equationautorefname~\eqref{eq:original_object} as the number of samples $M$ increases.
\end{proposition}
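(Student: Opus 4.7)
The plan is to reduce the proposition to a standard coupon-collector-style observation about i.i.d.\ sampling from a distribution with full support. Since the joint action space $\mathcal{U}$ is finite and the sample set $\boldsymbol{U}^{\text{ego}}=\{\boldsymbol{u}_i^{\text{ego}}\}_{i=1}^M$ is drawn i.i.d.\ from $\boldsymbol{\pi}^{\text{ego}}$, the only place the optimality gap between \equationautorefname~\eqref{eq:modified_objective} and \equationautorefname~\eqref{eq:original_object} can come from is the event that the true optimizer is missed by every one of the $M$ draws. Bounding the probability of this event will give the convergence statement.

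Concretely, I would first fix a state $s$ and set $\boldsymbol{u}^{\star\star}:=\arg\max_{\boldsymbol{u}\in\mathcal{U}}Q_{\text{tot}}^{\text{alter}}(s,\boldsymbol{u})$ (breaking ties arbitrarily). Let $p^{\star}:=\boldsymbol{\pi}^{\text{ego}}(\boldsymbol{u}^{\star\star}\mid s)$. By the hypothesis that the ego policy places non-zero probability on every action, $p^{\star}>0$. Since samples are i.i.d., the probability that $\boldsymbol{u}^{\star\star}\notin\boldsymbol{U}^{\text{ego}}$ is exactly $(1-p^{\star})^{M}$, which tends to $0$ as $M\to\infty$. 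On the complementary event, $\boldsymbol{u}^{\star}=\arg\max_{\boldsymbol{u}^{\text{ego}}\in\boldsymbol{U}^{\text{ego}}}Q_{\text{tot}}^{\text{alter}}(s,\boldsymbol{u}^{\text{ego}})$ coincides with $\boldsymbol{u}^{\star\star}$, and the restricted and unrestricted maxima coincide. Taking expectations and using boundedness of $Q_{\text{tot}}^{\text{alter}}$ (finite action space, bounded rewards and discount) then yields convergence of the expected value of the restricted objective to the unrestricted one, at a geometric rate in $M$.

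A couple of points deserve care. First, the statement asserts convergence to \equationautorefname~\eqref{eq:original_object}, which is written in terms of $Q_{\text{tot}}$, while the sampling procedure optimises $Q_{\text{tot}}^{\text{alter}}$; I would simply note that within the DAVE framework $Q_{\text{tot}}^{\text{alter}}$ plays the role of $Q_{\text{tot}}$ (it is the joint value that participates in credit assignment), so the claim is about the restricted-vs-unrestricted maximum of the same value function. Second, the argmax over $\boldsymbol{\pi}^{\text{ego}}$ in \equationautorefname~\eqref{eq:modified_objective} should be interpreted as: any policy whose support contains $\boldsymbol{u}^{\star\star}$ achieves, with probability $1-(1-p^{\star})^M$, the global maximum over samples; in particular a Dirac on $\boldsymbol{u}^{\star\star}$ is always optimal in the limit.

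The main obstacle, and really the only substantive one, is quantifying the rate: a uniform bound requires a uniform lower bound $p_{\min}$ on $\boldsymbol{\pi}^{\text{ego}}(\boldsymbol{u}^{\star\star}\mid s)$ across states, which is not guaranteed by the mere non-zero-support hypothesis. I would sidestep this by stating the result pointwise in $s$, or alternatively strengthen the hypothesis to $\boldsymbol{\pi}^{\text{ego}}(\boldsymbol{u}\mid s)\ge p_{\min}>0$ for all $(s,\boldsymbol{u})$, which yields the clean bound $\Pr[\boldsymbol{u}^{\star\star}\notin\boldsymbol{U}^{\text{ego}}]\le(1-p_{\min})^{M}$ and therefore a geometric convergence rate of the expected restricted maximum to the true maximum. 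Everything else is routine.
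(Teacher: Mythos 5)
Your proposal is correct and takes essentially the same route as the paper: both arguments hinge on the observation that, under full support, the probability of missing the optimal joint action in $M$ i.i.d.\ draws is $(1-\boldsymbol{\pi}^{\text{ego}}(\boldsymbol{u}^{\ast}\mid s))^{M}$, which vanishes as $M\to\infty$. Your additional remarks on expectation convergence, geometric rates, and the need for a uniform lower bound $p_{\min}$ go beyond the paper's one-line monotonicity argument but do not constitute a different method.
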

\begin{proof} 
Let ${(u^a)}^\ast$ denote the individual actions corresponding to the global optimal joint state-action value function under state $s$, and the optimal joint action is $\boldsymbol{u}^\ast= \{{(u^1)}^\ast, \dots, {(u^n)}^\ast\}$. Then the probability that the sampling procedure draws $\boldsymbol{u}^\ast$ is expressed as:
\begin{align}
p(\boldsymbol{u}^\ast)&=1-(1-\boldsymbol{\pi}^\text{ego}(\boldsymbol{u}^\ast|s))^M\nonumber\\
&=1-(1-\prod_{i=1}^n \pi_a^\text{ego}({(u^a)}^\ast|\tau^a))^M,\nonumber
\end{align}
where $\pi_a^\text{ego}(\cdot|\cdot) \in (0,1)$ is true for any action. So the second term $(1-\boldsymbol{\pi}^\text{ego}(\boldsymbol{u}^\ast|s))^M\in(0,1)$ in the equation decreases as $M$ increases, indicating that $p(\boldsymbol{u}^\ast)$ is positively correlated with the sample size $M$.
\end{proof}

%%%%%%%%%%%%%%%%%%%%%%%%%%%%%%%%%%%%%%%%%%%%%%%%%%%%%%%%%%%%

\section{Details of the Ego Policy Update}

\label{sec:appendix_antiego}

\begin{figure*}[h]
    \centering
    \includegraphics[width=0.85 \linewidth]{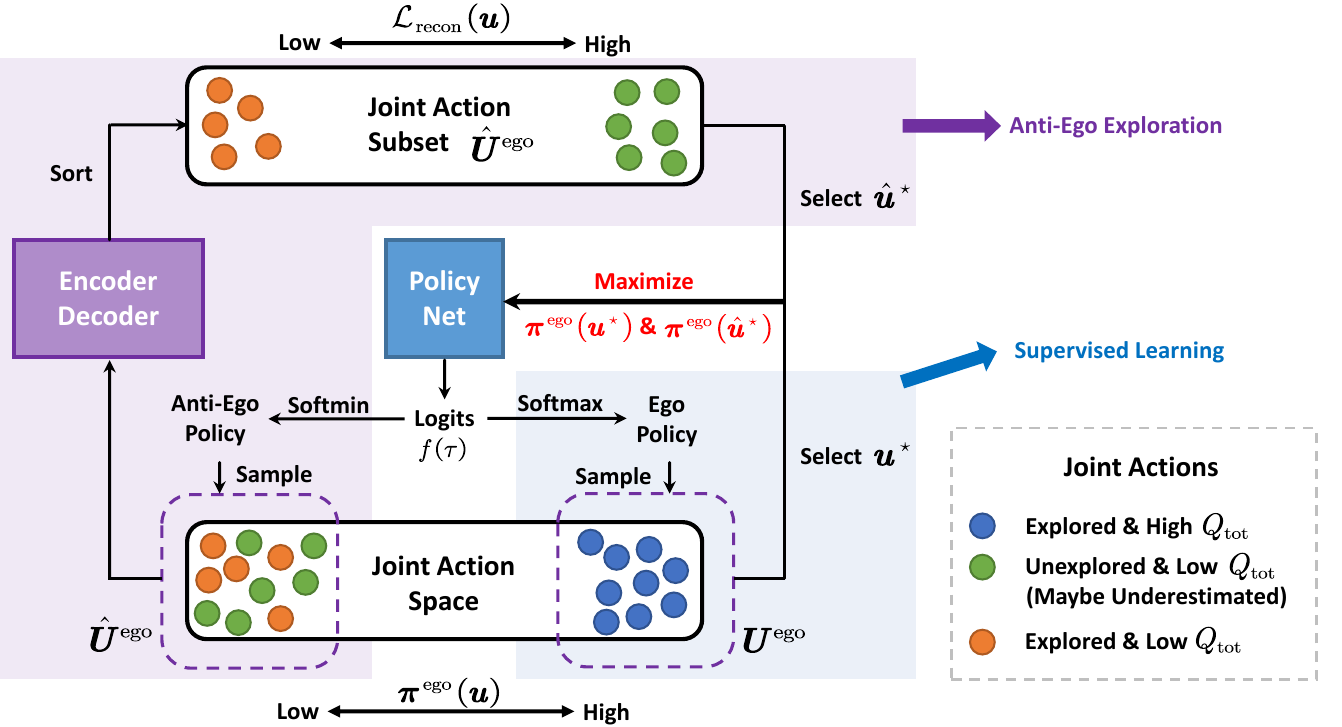}
    \caption{Diagram of the update of ego policy, where $\boldsymbol{U}^{\text{ego}}:=\{\boldsymbol{u}_i^{\text{ego}}\sim \boldsymbol{\pi}^\text{ego} (s)\}_{i=1}^{M}$ and $\hat{\boldsymbol{U}}^\text{ego}:=\{\hat{\boldsymbol{u}}_i^{\text{ego}}\sim \hat{\boldsymbol{\pi}}^\text{ego}(s)\}_{i=1}^{M}$. Dots indicate different joint actions. The selected actions are $\boldsymbol{u}^\star=\underset{\boldsymbol{u}^\text{ego}}{\arg\max} \,Q_\text{tot}^{\text{alter}} (s, \boldsymbol{u}^\text{ego})$ and $\hat{\boldsymbol{u}}^\star=\underset{\hat{\boldsymbol{u}}^\text{ego}}{\arg\max}\,\mathcal{L}_\text{recon}(s,\hat{\boldsymbol{u}}^\text{ego})$. }
    \label{fig:policy_update}
\end{figure*}

The update process of ego policy is shown in \figureautorefname~\ref{fig:policy_update}. During the initialization of neural networks, the state-action value corresponding to the optimal joint action may be underestimated, making it difficult for the ego policy to choose and then correct it, especially in problems with large action spaces. So we use the $\operatorname{Softmin}$ operator to generate an anti-ego policy so that low-probability actions are more likely to be selected. However, not all low-probability joint actions are unexplored and undervalued; some are sub-optimal actions that have been exhaustively explored (orange dots in \figureautorefname~\ref{fig:policy_update}). In order to select only unexplored joint actions, we distinguish the above two through the auto-encoder. \figureautorefname~\ref{fig:n_step_matrix_game} and \figureautorefname~\ref{fig:exploration_ablation} demonstrate the contribution of the anti-ego exploration mechanism.

%%%%%%%%%%%%%%%%%%%%%%%%%%%%%%%%%%%%%%%%%%%%%%%%%%%%%%%%%%%%

\section{Implementation Details}
\label{sec:appendix_implementation}

\subsection{Algorithmic Description}
\label{sec:appendix_pseudo}
The pseudo-code of DAVE is shown in Algorithm~\ref{alg:DAVE}.

\begin{algorithm}[htbp]
\caption{Training Procedure for DAVE}
\label{alg:DAVE}
\textbf{Hyperparameters}: Sample size $M$, discount factor $\gamma$, exploration coefficients $\lambda_\text{init}$\\
Initialize the parameters of the neural networks shown in \figureautorefname~\ref{fig:framework}
\begin{algorithmic}[1] %[1] enables line numbers
\FOR{each episode}
\STATE Get the global state $s_1$ and the local observations $\boldsymbol{z}_1=\{z^1_1, z^2_1,\dots, z^n_1\}$ of all agents
\FOR{$t \leftarrow 1$ to $T-1$}
\FOR{$a \leftarrow 1$ to $n$}
\STATE Select action $u^a_t$ according to the ego policy $\pi^\text{ego}_a$
\ENDFOR
\STATE Carry out the joint action $\boldsymbol{u}_t=\{u_t^1,\dots,u_t^n\}$
\STATE Get the global reward $r_{t+1}$, the next local observations $\boldsymbol{z}_{t+1}$, and the next state $s_{t+1}$
\ENDFOR
\STATE Store the episode in the replay buffer $\mathcal{D}$
\STATE Sample a batch of episodes $\mathcal{B}\sim$ Uniform($\mathcal{D}$)
\STATE Sample and obtain the joint action set $\boldsymbol{U}^{\text{ego}}:=\{\boldsymbol{u}_i^{\text{ego}}\sim \boldsymbol{\pi}^\text{ego} (s)\}_{i=1}^{M}$ for each trajectory in $\mathcal{B}$
\STATE Update the parameters of the alter ego value function and the IGM-free mixing network according \equationautorefname~\eqref{eq:alter_loss}
\STATE Obtain the anti-ego policies $\hat{\pi}^\text{ego}_a$ of each agent
\STATE Sample and obtain $\hat{\boldsymbol{U}}^\text{ego}:=\{\hat{\boldsymbol{u}}_i^{\text{ego}}\sim \hat{\boldsymbol{\pi}}^\text{ego}(s)\}_{i=1}^{M}$ by sampling $M$ times from the anti-ego policy for each state $s$ in $\mathcal{B}$
\STATE Find the most novel joint action $\boldsymbol{u}^\star$ for each state $s$ in $\mathcal{B}$ according \equationautorefname~\eqref{eq:most_novel_action}
\STATE Update the parameters of the ego policy according \equationautorefname~\eqref{eq:modified_ego_policy}
\STATE Update the parameters of the auto-encoder according \equationautorefname~\eqref{eq:encoder_loss}
\STATE Update the parameters of the target network periodically
\ENDFOR
\end{algorithmic}
\end{algorithm}

\subsection{Hyperparameters}

Unless otherwise stated, the hyperparameter settings in different environments are shown in \tableautorefname~\ref{table:hyperparameters}, which are exactly the same as the settings in PyMARL\footnote{\href{https://github.com/oxwhirl/pymarl}{https://github.com/oxwhirl/pymarl}.}. All experiments in this paper are run on Nvidia GeForce RTX 3090 graphics cards and Intel(R) Xeon(R) Platinum 8280 CPU. For baselines, exploration is performed during training using independent $\epsilon$-greedy action selection. $\epsilon$ is annealing linearly from 1.0 to 0.05 and the annealing period of $\epsilon$ is 50000 steps. All experiments in this paper are carried out with 6 random seeds in order to avoid the effect of any outliers.

\begin{table}[htbp]
\centering
\begin{tabular}{lll}
\hline
\textbf{Name}                                                                      & \textbf{Description}                                & \textbf{Value} \\ \hline
                                                                                   & Learning rate                                       & 0.0005         \\
                                                                                   & Type of optimizer                                   & RMSProp        \\
$\alpha$                                                                           & RMSProp param                                       & 0.99           \\
$\epsilon$ for optimizer                                                           & RMSProp param                                       & 0.00001        \\
                                                                                   & How many episodes to update target networks         & 200            \\
                                                                                   & Reduce global norm of gradients                     & 10             \\
                                                                                   & Batch size                                          & 32             \\
                                                                                   & Capacity of replay buffer                           & 5000           \\
$\gamma$                                                                           & Discount factor                                     & 0.99           \\ \hline
$M$                                                                                & Sample size                                         & 100            \\
$\lambda_\text{init}$                                                              & Starting value for exploration coefficient annealing& 0.5            \\
$\lambda_\text{end}$                                                               & Ending value for exploration coefficient annealing  & 0              \\ 
                                                                                   & Annealing period for the multi-step matrix game     & 25k            \\
                                                                                   & Annealing period for easy scenarios in SMAC         & 500k            \\
                                                                                   & Annealing period for hard scenarios in SMAC         & 1.5m            \\ \hline
\end{tabular}
\caption{Hyperparameter settings.}
\label{table:hyperparameters}
\end{table}

%%%%%%%%%%%%%%%%%%%%%%%%%%%%%%%%%%%%%%%%%%%%%%%%%%%%%%%%%%%%

\section{Additional Experiments and Details}
\label{sec:appendix_experiment}

\subsection{Comparison between IGM-Based and IGM-Free Mixing Networks}
\label{sec:IGM_based_or_free}
To verify that IGM-based mixing network restricts the set of global state-action value functions that can be represented, we synthesize an artificial dataset to represent the reward functions of two agents in a single state matrix game. Each agent has 101 actions, which means $u^i \in\{0, 1, \dots,100\}$. In order to only measure the influence of whether the mixing network follows the IGM assumption, we fix the individual action-value function as follows:
\begin{equation*}
Q_i(u^i)=\frac{u^i}{100}, \quad \forall i \in\{1, 2\}.
\end{equation*}
The reward value $R$ for a joint action is given as follows:
\begin{equation*}
R(u^1,u^2)=\sin\big(2\pi Q_1(u^1)\big)+\exp\big(Q_2(u^2)\big).
\end{equation*}
The ground-truth joint value function is shown in \figureautorefname~\ref{fig:nonmono_test}. Next we calculate the joint action value $Q_{\text{tot}}$ for the selected $\boldsymbol{u}$ based on the given $Q_i$ and $R$. Most value decomposition approaches, such as QMIX and QPLEX, directly constrain the weights of the neural network to be non-negative, so we use the mixing network in QMIX as the IGM-based model to be trained. Meanwhile, the variant that removes the absolute function is viewed as the IGM-free model. It should also be noted that we use the global state and the actions of all agents as input to the hypernetwork that generates weights to enhance the expressiveness of the mixing network. \figureautorefname~\ref{fig:nonmono_test} represents the loss curve and \figureautorefname~\ref{fig:learning_process} illustrates the joint action-value function estimated by the two models.

\begin{figure*}[h]
    \centering
    \subfigure{
        \includegraphics[width=1.5 in]{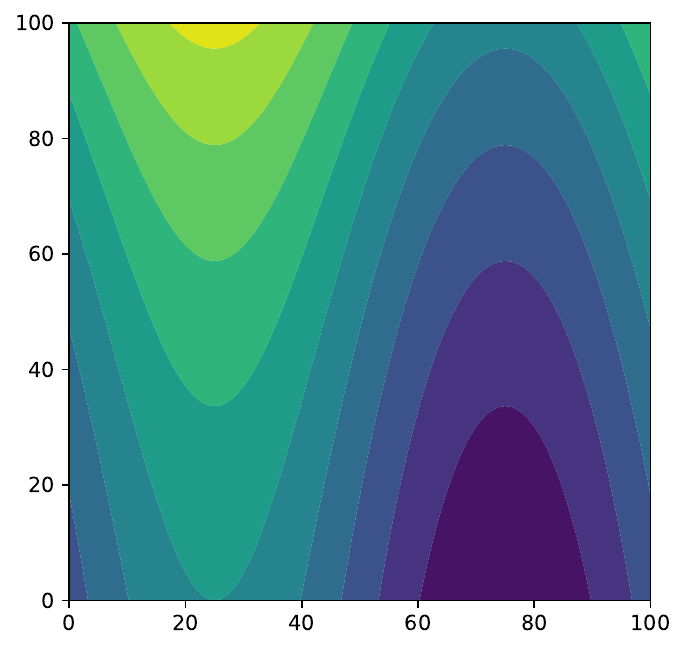}
    }
    \hspace{0.8 in}
    \subfigure{
        \includegraphics[width=1.5 in]{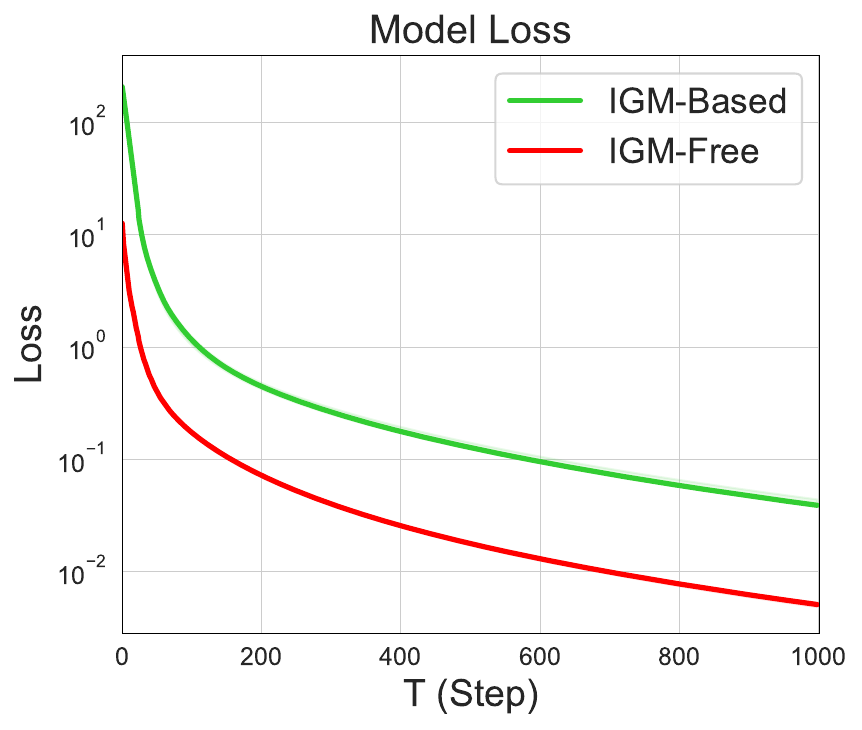}
     }
    \vspace{-0.1 in}
    \caption{\textbf{Left}: Contours for ground-truth $Q_{\text{tot}}$, where $x$-axis and $y$-axis mean each agent’s actions. \textbf{Right}: The loss curves for two models during training.}
    \label{fig:nonmono_test}
\end{figure*}

\begin{figure}[h]
\centering
\subfigure[200 step]{
    \begin{minipage}[l]{0.16\linewidth}
    \includegraphics[width=2.2cm]{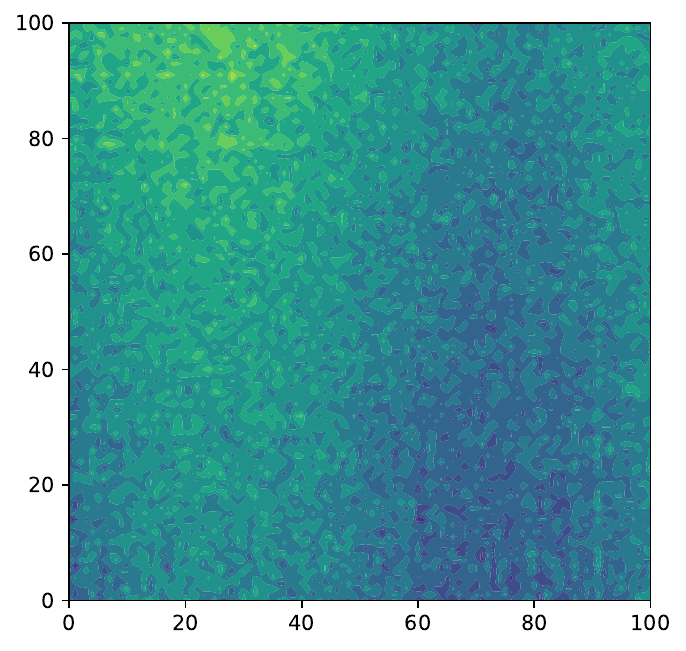}
    \end{minipage}
}
\subfigure[400 step]{
    \begin{minipage}[l]{0.16\linewidth}
    \includegraphics[width=2.2cm]{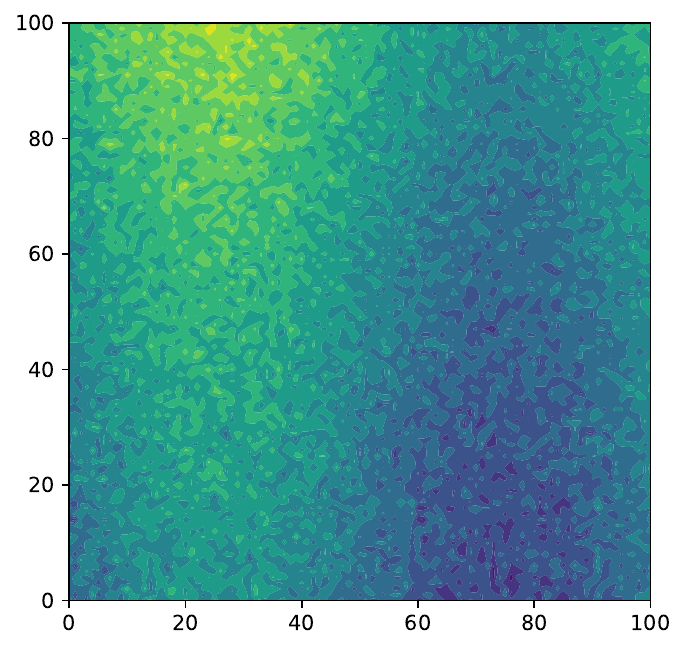}
    \end{minipage}
}
\subfigure[600 step]{
    \begin{minipage}[l]{0.16\linewidth}
    \includegraphics[width=2.2cm]{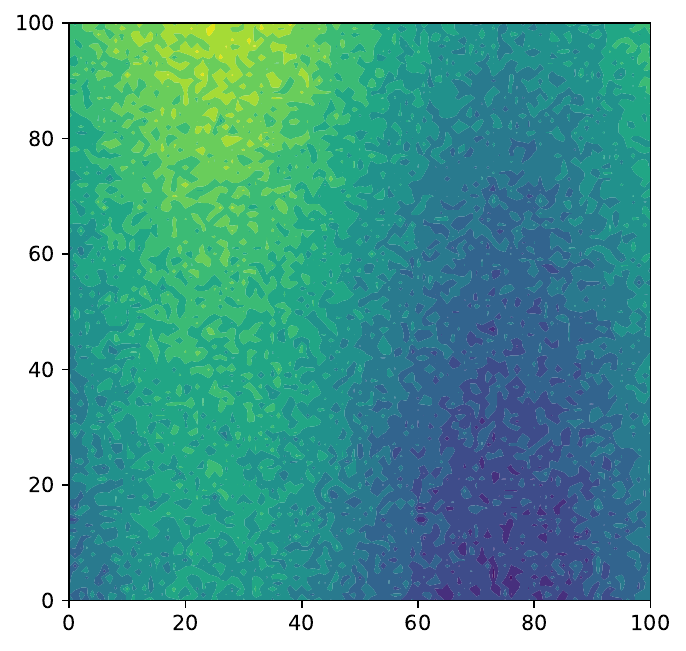}
    \end{minipage}
}
\subfigure[800 step]{
    \begin{minipage}[l]{0.16\linewidth}
    \includegraphics[width=2.2cm]{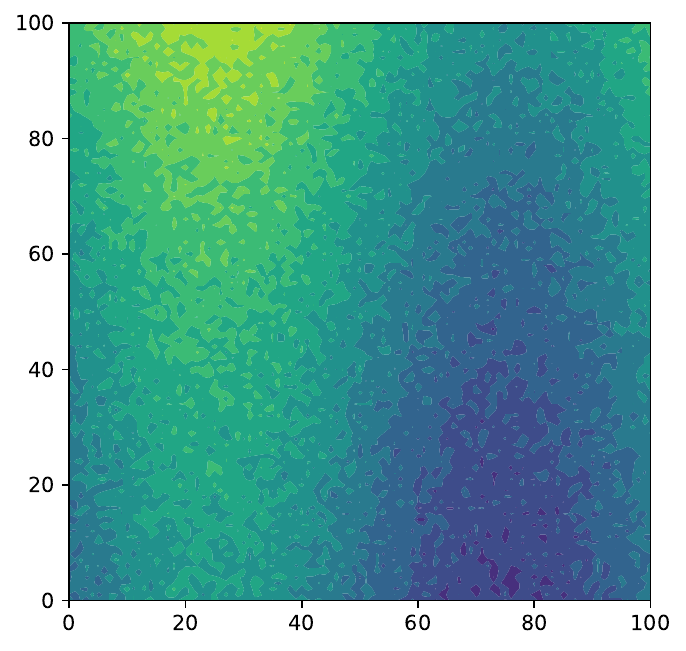}
    \end{minipage}
}
\subfigure[1000 step]{
    \begin{minipage}[l]{0.16\linewidth}
    \includegraphics[width=2.2cm]{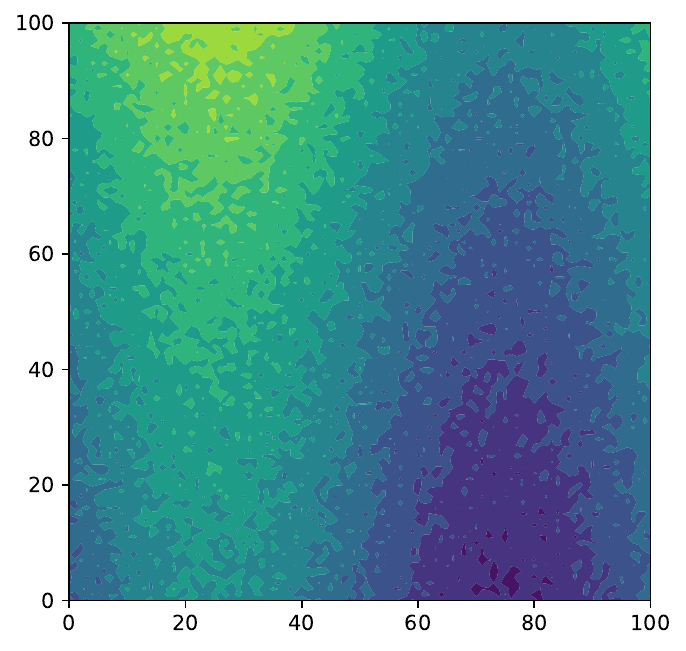}
    \end{minipage}
}

\subfigure[200 step]{
    \begin{minipage}[l]{0.16\linewidth}
    \includegraphics[width=2.2cm]{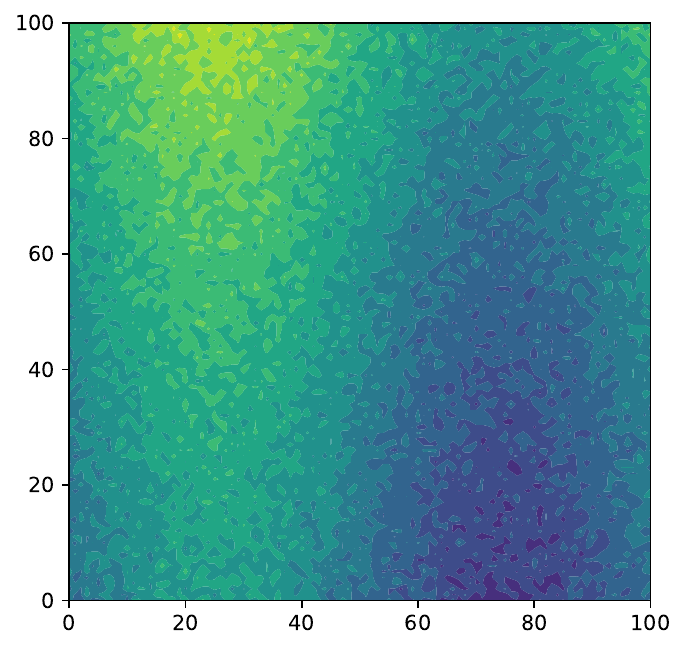}
    \end{minipage}
}
\subfigure[400 step]{
    \begin{minipage}[l]{0.16\linewidth}
    \includegraphics[width=2.2cm]{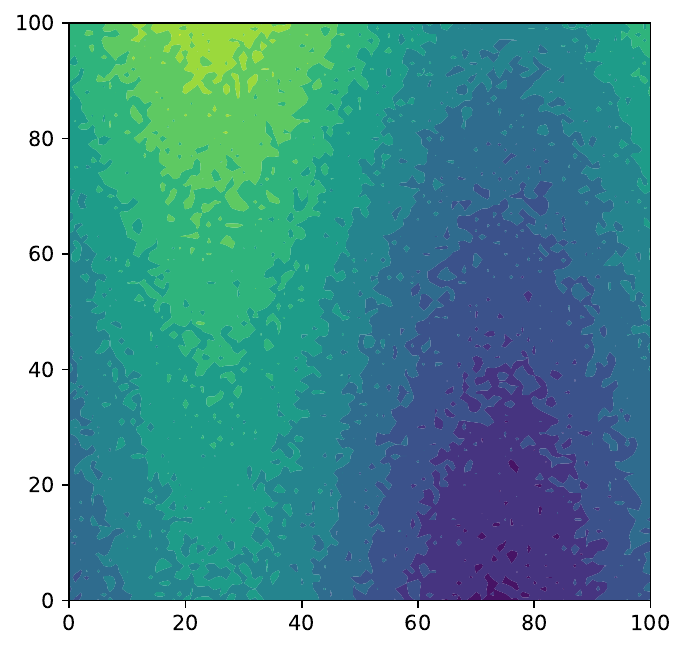}
    \end{minipage}
}
\subfigure[600 step]{
    \begin{minipage}[l]{0.16\linewidth}
    \includegraphics[width=2.2cm]{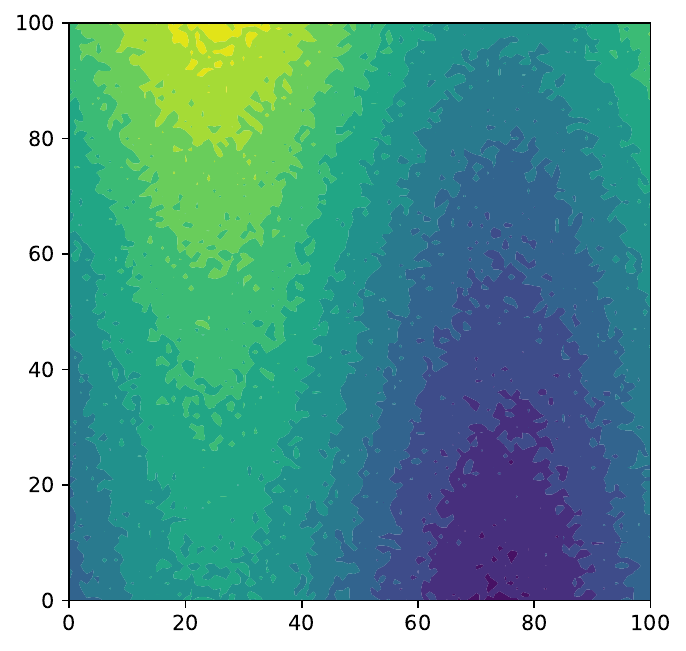}
    \end{minipage}
}
\subfigure[800 step]{
    \begin{minipage}[l]{0.16\linewidth}
    \includegraphics[width=2.2cm]{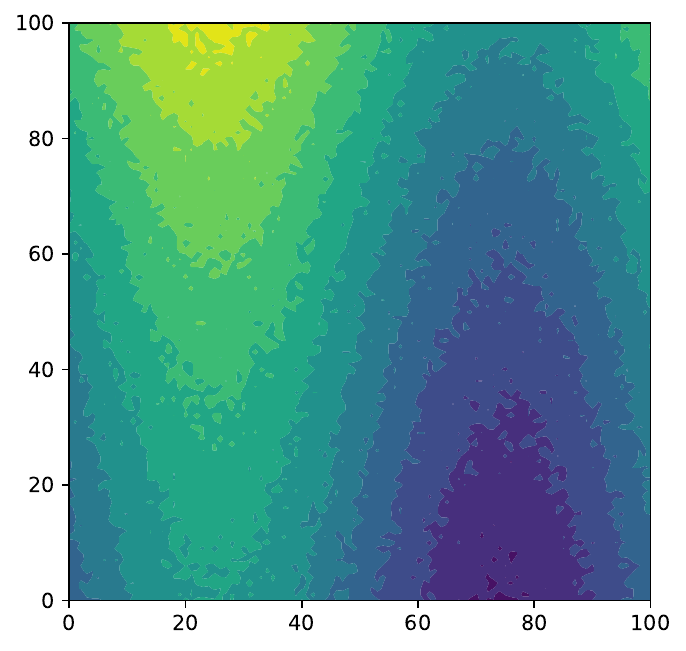}
    \end{minipage}
}
\subfigure[1000 step]{
    \begin{minipage}[l]{0.16\linewidth}
    \includegraphics[width=2.2cm]{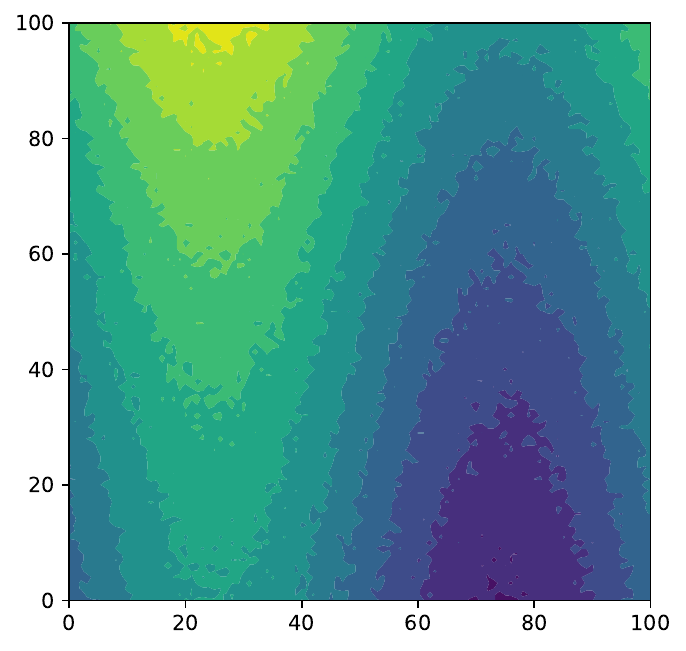}
    \end{minipage}
}
\vspace{-0.1 in}
\caption{Contours for estimated $Q_{\text{tot}}$. Colored values represent the values of $Q_{\text{tot}}$ for IGM-based ((a)-(e)) and IGM-free ((f)-(j)) mixing networks.}
\label{fig:learning_process}
\end{figure}

From the results, it can be concluded that when the joint value function $Q_{\text{tot}}$ is not a simple monotone function with respect to $Q_i$, the representation ability of the IGM-based mixing network is obviously limited, and the convergence speed is also slower than IGM-free mixing network. However, it is not possible to simply replace the IGM-based mixing network in the existing work with an IGM-free one, because this will cause each agent to fail to find the individual action corresponding to the optimal joint action during decentralized execution. Of course, this is why we propose the DAVE framework.

%%%%%%%%%%%%%%%%%%%%%%%%%%%%%%%%%%%%%%%%%%%%%%%%%%%%%%%%%%%%

\subsection{Policy-Based MARL Methods on Single-State Matrix Games}
\label{sec:appendix_policybased}

DAVE is not a standard actor-critic framework in the strict sense. The ego policy in DAVE is updated in a supervised learning manner, which maximizes the probability of the joint action corresponding to the optimal joint action value in the sample. Thus DAVE can explicitly search for optimal joint actions and easily solve non-monotonic problems. FACMAC still has to rely on the monotonic mixing network to update the critic; otherwise, the performance will deteriorate. 

Moreover, the existing policy-based method that does not use IGM for the critic cannot solve the non-monotonic task in our paper. We test two representative policy-based algorithms, MAPPO and MADDPG, in two didactic problems. We present the experimental results over 100 trials of 10000 episodes in \tableautorefname~\ref{tab:policy_based_matrix_I} and \tableautorefname~\ref{tab:policy_based_matrix_II}, which demonstrate that although they do not explicitly introduce IGM, they still perform poorly.

\begin{table}[h]
\centering
\begin{tabular}{c|ccc|cccc|cccc}
\hline
\multirow{2}{*}{} & \multicolumn{3}{c|}{$k$=0} & \multicolumn{4}{c|}{$k$=6}       & \multicolumn{4}{c}{$k$=7.5}      \\ \cline{2-12} 
                  & -12   & 0     & \textbf{8} & -12  & 0    & 6     & \textbf{8} & -12  & 0    & 7.5   & \textbf{8} \\ \hline
MADDPG            & 35\%  & 45\%  & 20\%       & 27\% & 34\% & 31\%  & 8\%        & 27\% & 35\% & 34\%  & 4\%        \\ \hline
MAPPO             & 0\%   & 100\% & 0\%        & 0\%  & 0\%  & 100\% & 0\%        & 0\%  & 0\%  & 100\% & 0\%        \\ \hline
\end{tabular}
\caption{Proportion of different convergence results in Matrix Game I.}
\label{tab:policy_based_matrix_I}
\end{table}

\begin{table}[h]
\centering
\begin{tabular}{c|ccc|cccc|cccc}
\hline
\multirow{2}{*}{} & \multicolumn{3}{c|}{$k$=0} & \multicolumn{4}{c|}{$k$=25}      & \multicolumn{4}{c}{$k$=100}       \\ \cline{2-12} 
                  & 0     & 2    & \textbf{10} & -25 & 0    & 2     & \textbf{10} & -100 & 0    & 2     & \textbf{10} \\ \hline
MADDPG            & 51\%  & 0\%  & 49\%        & 3\% & 45\% & 45\%  & 7\%         & 1\%  & 20\% & 76\%  & 3\%         \\ \hline
MAPPO             & 0\%   & 0\%  & 100\%       & 0\% & 0\%  & 100\% & 0\%         & 0\%  & 0\%  & 100\% & 0\%         \\ \hline
\end{tabular}
\caption{Proportion of different convergence results in Matrix Game II.}
\label{tab:policy_based_matrix_II}
\end{table}

We carry out MADDPG and MAPPO in a policy-based setting. The above two algorithms cannot perfectly solve the single-state matrix problems even when more samples can be accessed. As the difficulty of the problem increases, the proportion of MADDPG converging to the optimal solution gradually decreases. MAPPO converges steadily to a sub-optimal solution.

%%%%%%%%%%%%%%%%%%%%%%%%%%%%%%%%%%%%%%%%%%%%%%%%%%%%%%%%%%%%

\subsection{Additional Explanations for Results in Multi-Step Matrix Game}

In MAVEN's official code, the implementation of the multi-step matrix problem does not match the description in its original paper, which is reflected in the terminal state corresponding to the bad branch (see \figureautorefname~\ref{fig:maven_mistake}). This does not change the non-monotonicity of the initial state, but reduces the expected return for agents to choose the bad branch. In other words, the above mistakes will make it \textbf{EASIER} for agents to learn the optimal policy. We rectified this error and retested all baselines for their real performance. \looseness=-1

\begin{figure*}[h]
    \centering
    \subfigure[Multi-step matrix game in MAVEN’s paper.]{
        \includegraphics[width=1.5 in]{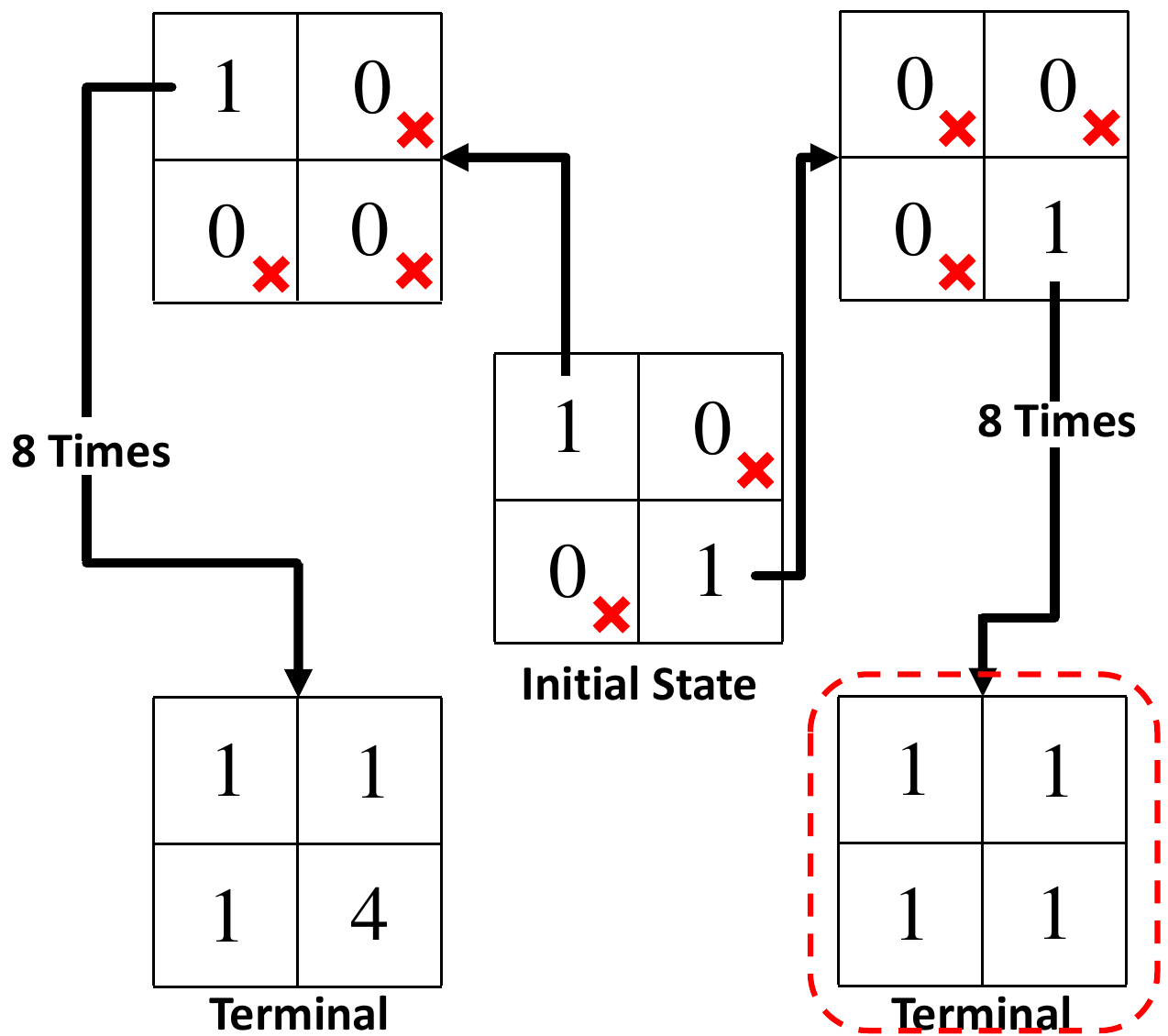}
    }
    \hspace{1.0 in}
    \subfigure[Multi-step matrix game in MAVEN’s code.]{
        \includegraphics[width=1.5 in]{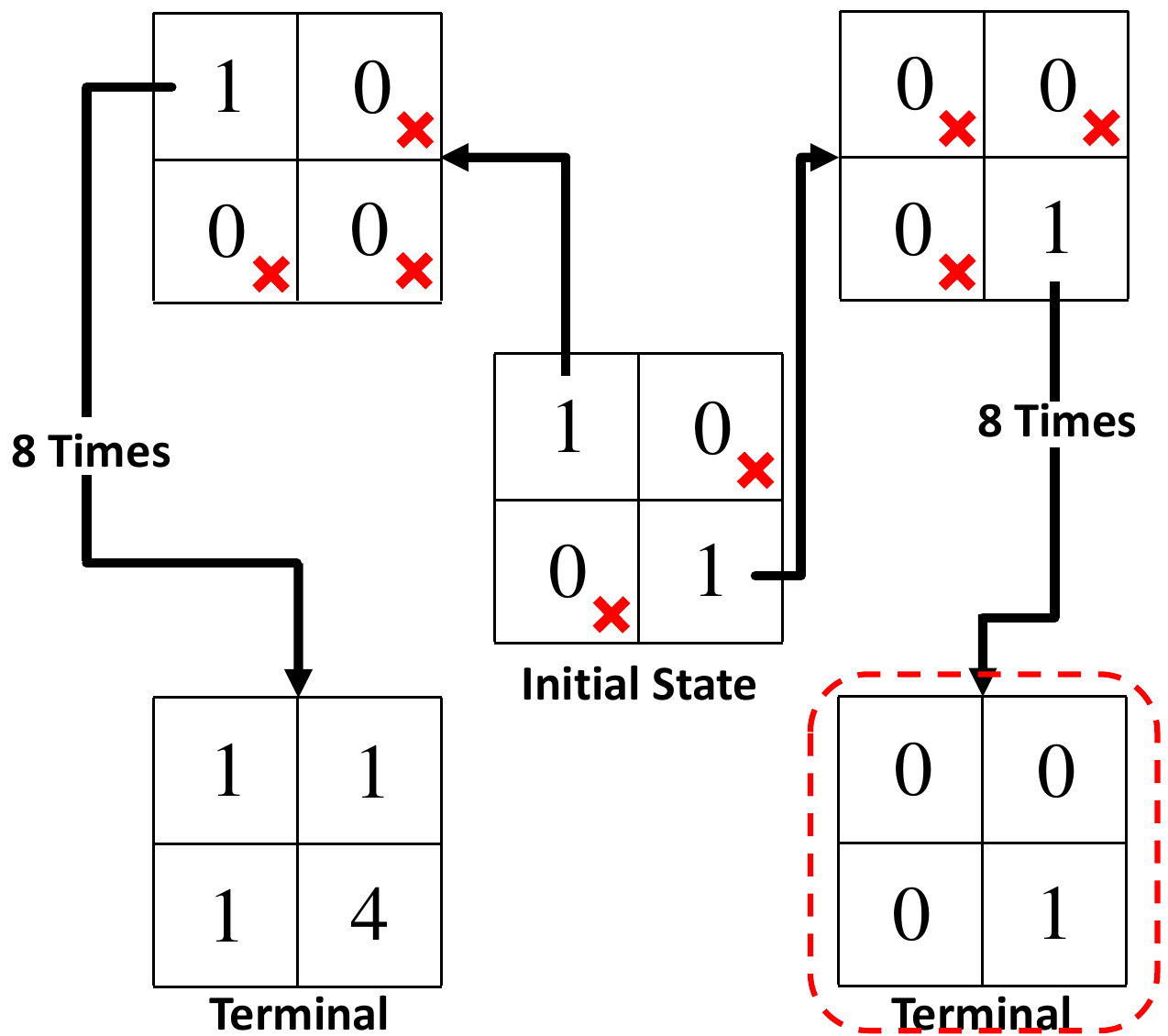}
     }
    \vspace{-0.1 in}
    \caption{The difference between the multi-step matrix game environment in the original paper of MAVEN and in the actual code of MAVEN.}
    \label{fig:maven_mistake}
\end{figure*}

%%%%%%%%%%%%%%%%%%%%%%%%%%%%%%%%%%%%%%%%%%%%%%%%%%%%%%%%%%%%

\subsection{Additional Introduction for SMAC Environment}
\label{sec:appendix_environment}

The single-state matrix games and the multi-step matrix game are elaborated in the main body of the paper. Here we only elaborate on the SMAC environment. The action space of each agent is discrete. The version of StarCraft II used in this paper is the latest 4.10. For a fair comparison, all algorithms are run on the same version of StarCraft and SMAC. Table~\ref{table:scenario} shows the details of the scenarios used in this paper.

\begin{table}[h]
\centering
\begin{tabular}{lcccc}
\hline
\textbf{Name}  & \textbf{Ally Units}                                                         & \textbf{Enemy Units}                                                        & \textbf{Type}\\ \hline
2s\_vs\_1sc           & 2 Stalkers              & 1 Spine Crawler              & \begin{tabular}[c]{@{}c@{}}Homogeneous\\ Asymmetric\end{tabular}\\ \hline
3s5z           & \begin{tabular}[c]{@{}c@{}}3 Stalkers\\ 5 Zealots\end{tabular}              & \begin{tabular}[c]{@{}c@{}}3 Stalkers\\ 5 Zealots\end{tabular}              & \begin{tabular}[c]{@{}c@{}}Heterogeneous\\ Symmetric\end{tabular}\\ \hline
2c\_vs\_64zg           & 2 Colossi              & 64 Zerglings              & \begin{tabular}[c]{@{}c@{}}Homogeneous\\ Asymmetric\end{tabular}\\ \hline
5m\_vs\_6m     & 5 Marines                                                                   & 6 Marines                                                                   & \begin{tabular}[c]{@{}c@{}}Homogeneous\\ Asymmetric\end{tabular}                          \\ \hline
3s\_vs\_5z     & 3 Stalkers                                                                  & 5 Zealots                                                                   & \begin{tabular}[c]{@{}c@{}}Homogeneous\\ Asymmetric\end{tabular}                       \\ \hline
8m\_vs\_9m     & 8 Marines                                                                   & 9 Marines                                                                   & \begin{tabular}[c]{@{}c@{}}Homogeneous\\ Asymmetric\end{tabular}                      \\ \hline
10m\_vs\_11m     & 10 Marines                                                                   & 11 Marines                                                                   & \begin{tabular}[c]{@{}c@{}}Homogeneous\\ Asymmetric\end{tabular}                      \\ \hline
MMM2           & \begin{tabular}[c]{@{}c@{}}1 Medivac\\ 2 Marauders\\ 7 Marines\end{tabular} & \begin{tabular}[c]{@{}c@{}}1 Medivac\\ 3 Marauder\\ 8 Marines\end{tabular}  & \begin{tabular}[c]{@{}c@{}}Heterogeneous\\ Asymmetric\\ Macro tactics\end{tabular}    \\ \hline
6h\_vs\_8z           & 6 Hydralisks                                 & 8 Zealots  & \begin{tabular}[c]{@{}c@{}}Homogeneous\\ Asymmetric\end{tabular}    \\ \hline
\end{tabular}
\caption{Maps in different scenarios.}
\label{table:scenario}
\end{table}

%%%%%%%%%%%%%%%%%%%%%%%%%%%%%%%%%%%%%%%%%%%%%%%%%%%%%%%%%%%%

\subsection{Additional Experiments on SMACv2}

Although SMAC is a popular multi-agent benchmark in recent years, it suffers from a lack of stochasticity. To remedy this significant deficiency, a more challenging SMACv2 environment is proposed. In SMACv2, team composition is randomly generated, which means that unit types are not fixed. Meanwhile, the start positions of all agents are also random. Finally, the sight range and attack range of the units are changed to the values from SC2 instead of fixed values. These three major changes make the tasks in SMACv2 more challenging.

To assess the generalizability of our proposed algorithm, we evaluate the performance of DAVE in several representative scenarios of SMACv2. Some of the scenarios are symmetrical and others are asymmetrical. To ensure consistency with the original paper on SMACv2, we use the hyperparameter settings from PyMARL2\footnote{\href{https://github.com/benellis3/pymarl2}{https://github.com/benellis3/pymarl2}.}~\cite{Hu2021revisiting}. To demonstrate that the DAVE framework is applicable to any existing value decomposition method and can transform the original algorithm into an IGM-free version with guaranteed convergence, we compare the performance of two classic algorithms, QMIX and QPLEX, with their variants respectively. Specifically, IGM-Free variants are obtained by simply removing all absolute functions from the original methods. We conduct the experiments using StarCraft II version 4.10 and present the results in \figureautorefname~\ref{fig:qmix_smacv2_result} and \figureautorefname~\ref{fig:qplex_smacv2_result}.

\begin{figure}[t]
    \centering
    \includegraphics[width=5.5 in]{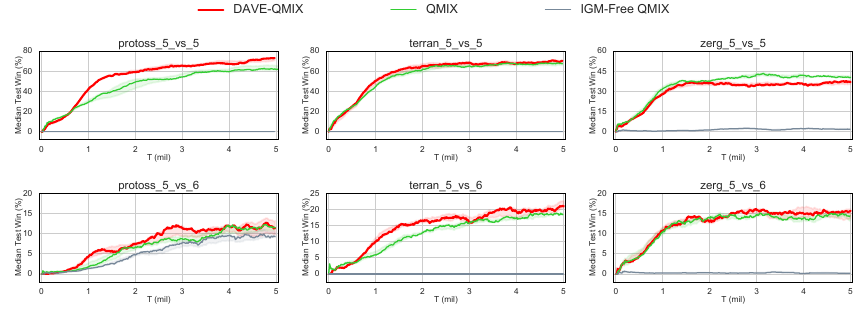}
    \vspace{-0.25 in}
    \caption{Comparisons of median win rate for variants of QMIX on SMACv2.}
    \label{fig:qmix_smacv2_result}
\end{figure}

\begin{figure}[t]
    \centering
    \includegraphics[width=5.5 in]{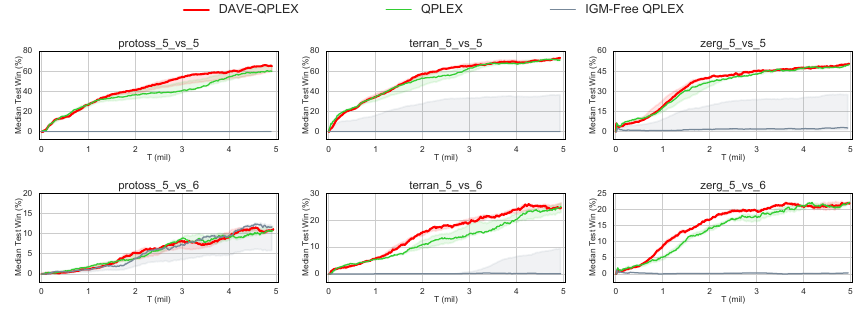}
    \vspace{-0.25 in}
    \caption{Comparisons of median win rate for variants of QPLEX on SMACv2.}
    \label{fig:qplex_smacv2_result}
\end{figure}

\begin{figure}[t]
    \centering
    \includegraphics[width=5.5 in]{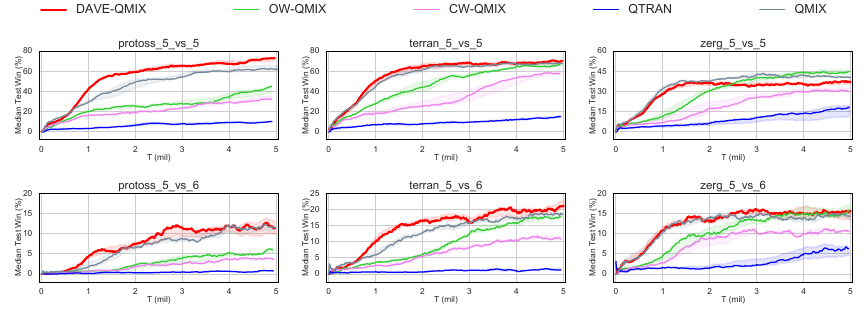}
    \vspace{-0.25 in}
    \caption{Median win rates on a range of SMACv2 mini-games. Note that only DAVE-QMIX is IGM-free.}
    \label{fig:smacv2_result}
\end{figure}

The experimental results demonstrate that DAVE-QMIX outperforms QMIX in most scenarios, particularly in the \emph{protoss\_5\_vs\_5} and \emph{terran\_5\_vs\_6} scenarios. Notably, both DAVE-QMIX and IGM-Free QMIX are IGM-free methods. However, while IGM-Free QMIX is unable to guarantee individual actions consistent with the optimal joint action, its performance is significantly worse than DAVE-QMIX and QMIX, and even fails in most scenarios. Similarly, DAVE-QPLEX outperforms both vanilla QPLEX and IGM-Free QPLEX, despite the weight of its transformation module no longer being restricted to non-negative values. In summary, DAVE variants perform significantly better than original methods in scenarios that do not meet the IGM assumptions, and also exhibit competitive performance in other scenarios, regardless of the environment complexity.

In addition to evaluating different variants of the same algorithm, we compare the performance of different baselines designed to solve the IGM problem. Since these baselines are all based on QMIX, we show their learning curves and that of DAVE-QMIX in \figureautorefname~\ref{fig:smacv2_result}. It is evident from the results that the complex mechanisms introduced by other algorithms cause them to perform worse than the vanilla QMIX and DAVE-QMIX. The experimental results on SMACv2 demonstrate the excellent generalization of DAVE to different algorithms and environments.

%%%%%%%%%%%%%%%%%%%%%%%%%%%%%%%%%%%%%%%%%%%%%%%%%%%%%%%%%%%%

\subsection{Additional Experiments on Multi-Agent MuJoCo}

In order to test the performance of the same algorithm following IGM and not following IGM in a complex environment, we also carry out vanilla QMIX, IGM-Free QMIX and DAVE on Multi-Agent MuJoCo (MA-MuJoCo). It is worth noting that the latter two do not require the IGM assumption at all, which means that they can fit richer function classes but it is more difficult to find the optimal joint action. Since most of the conventional value decomposition methods are only applicable to environments where the action space is discrete, we discretize the action space of the MA-MuJoCo environment reasonably. Each joint in the robot is considered as an agent. We use uniform quantization to discretize continuous actions and modify the action space of each agent to $\mathcal{U} = \left\{ \frac{2j}{K-1} - 1\right\}_{j=0}^{K-1 }$ instead of the original $\mathcal{U} = [-1, 1]$. $K$ is a hyperparameter and in this paper $K=31$. In addition, MA-MuJoCo is a partially observable environment, and each agent (joint) can only observe the information of neighbor agents (joints). \figureautorefname~\ref{fig:mujoco_intro} shows four classic scenarios. Numbers in brackets indicate the number of agents (joints) in each task.

\begin{figure*}[h]
    \centering
    \subfigure[Hopper (3)]{
        \includegraphics[width=1.1 in]{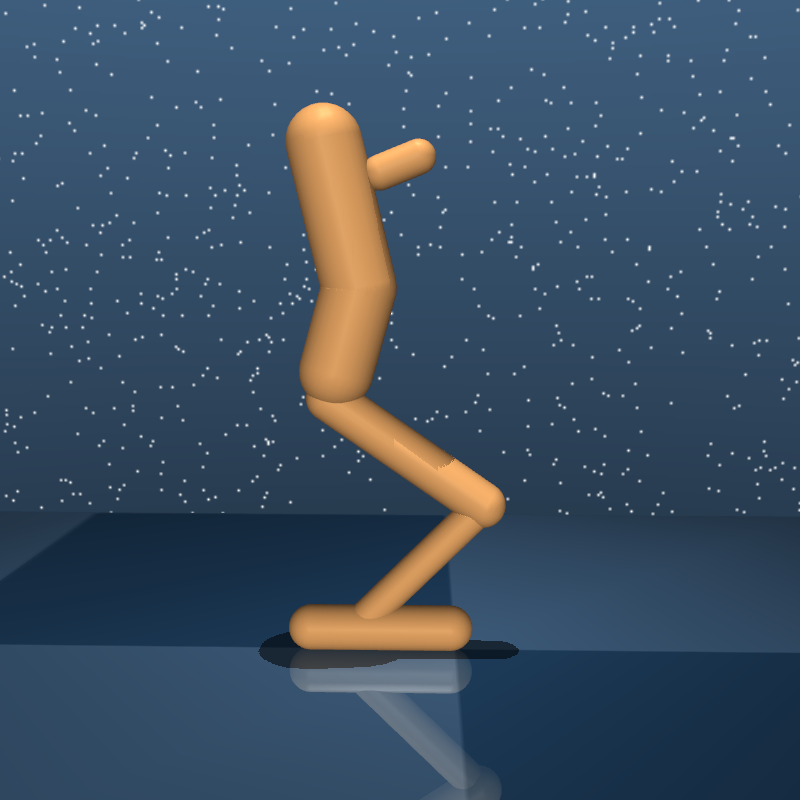}
        \label{fig:Hopper}
     }
    \hspace{0.4 in}
        \subfigure[HalfCheetah (6)]{
        \includegraphics[width=1.1 in]{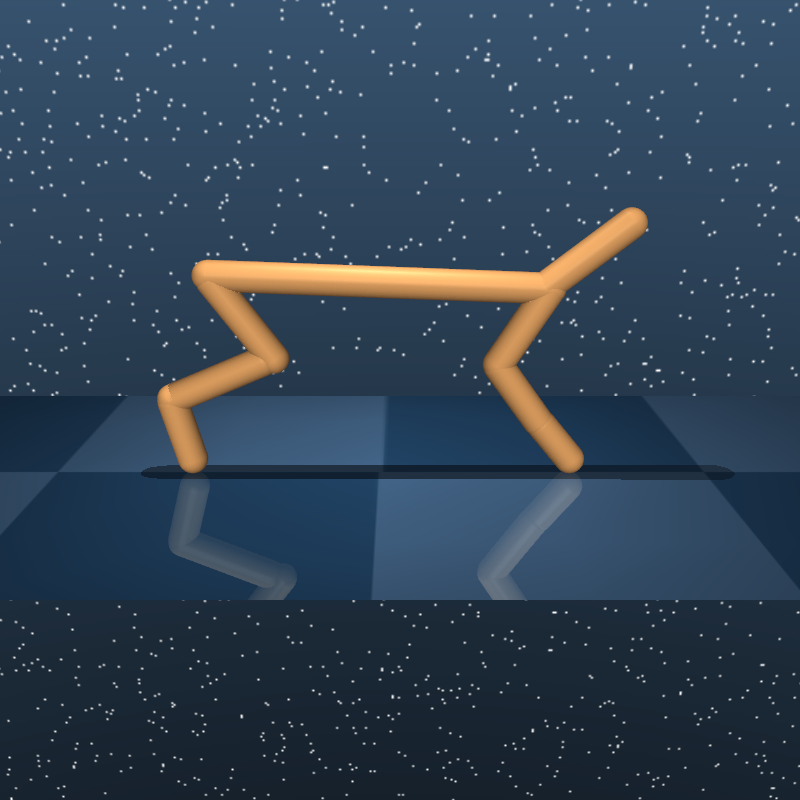}
        \label{fig:HalfCheetah}
    }
    \hspace{0.4 in}
    \subfigure[Humanoid (17)]{
        \includegraphics[width=1.1 in]{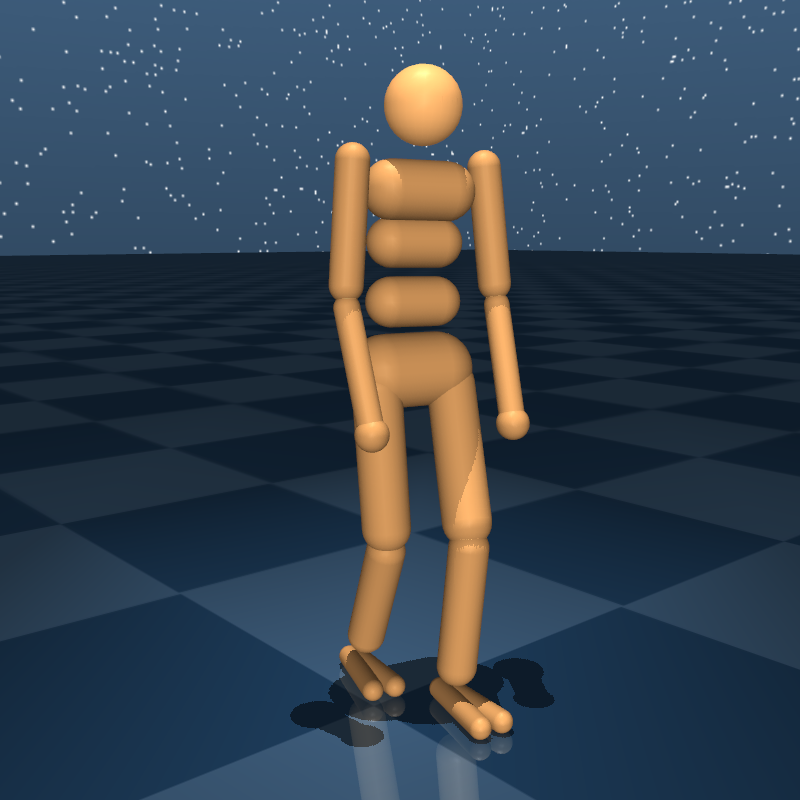}
        \label{fig:Humanoid}
    }
    % \vspace{-0.15 in}
    \caption{Illustration of benchmark tasks in Multi-Agent MuJoCo.}
    \label{fig:mujoco_intro}
\end{figure*}

It can be seen from \figureautorefname~\ref{fig:mujoco_episode_return} that even without the IGM assumption, DAVE can still perform better than vanilla QMIX in most scenarios, especially in the \emph{Hopper} and \emph{Humanoid} scenarios. IGM-Free QMIX fails in all tasks. In summary, in the case of the same loss function related to reinforcement learning, QMIX with the dual self-awareness framework performs better than vanilla QMIX in all cases. Even in scenarios with large action spaces like \emph{Humanoid}, which contains $31^{17}\approx 2.26 \times 10^{25}$ joint actions, DAVE still outperforms QMIX. Like the results in the main text, the experimental results in MA-MuJoCo further prove that DAVE is not only suitable for didactic examples, but also capable of complex scenarios.

\begin{figure}[h]
    \centering
    \includegraphics[width=5.5 in]{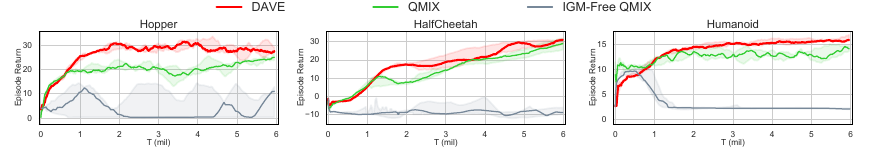}
    \vspace{-0.15 in}
    \caption{Median episode return on different MA-MuJoCo tasks. Note that only QMIX follows the IGM assumption.}
    \label{fig:mujoco_episode_return}
\end{figure}

\section{More Related Work}
\label{sec:more_related_work}

By improving the joint exploration of all agents, UneVEn~\cite{Gupta2020UneVEnUV} addresses the issue that stems from the monotonicity constraint in representing the joint-action value function. However, in some complex environments such as SMAC, its additional complex mechanism leads to significantly slower convergence. TESSERACT~\cite{Mahajan2021TesseractTA} views the Q-function as a tensor and utilizes low-rank tensor approximations to model agent interactions relevant to the task. DCG~\cite{Bhmer2019DeepCG} factors the joint value function of all agents into payoffs between pairs of agents based on a coordination graph, but it requires additional domain knowledge and violates CTDE principles. QTRAN++~\cite{Son2020QTRANIV} is an improved version of QTRAN, however it still does not completely remove absolute functions. STEP~\cite{Zhang2023InducingSE} can solve non-monotonic problems but it focuses on how to find Stackelberg equilibrium via asynchronous action coordination. Besides, some value decomposition methods~\cite{Sun2021DFACFF} from the distributional perspective replace the original IGM principle with the distribution-style IGM to deal with highly stochastic environments. The DAVE framework we proposed does not change the loss function of the underlying value decomposition algorithm, and only requires simple modifications to make it IGM-free. The DAVE-based value decomposition method can outperform the original algorithm in scenarios where the IGM assumption is not satisfied, and it can also have competitive performance in other scenarios.

There has been a lot of excellent work on exploration in the field of multi-agent reinforcement learning. UneVEn uses novel action-selection schemes during exploration to overcome relative overgeneralization. EITI/EDTI~\cite{Wang2019InfluenceBasedME} measure the amount and value of one agent's influence on the other agents' exploration processes to improve exploration. \citet{Dimakopoulou2018CoordinatedEI} present several seed sampling schemes for efficient coordinated exploration in concurrent reinforcement learning. CMAE~\cite{Liu2021CooperativeEF} uses target policies to maximize returns and exploration policies to collect unexplored data, which is similar to our work. However, CMAE focuses on solving the sparse-reward task, whereas anti-ego exploration in DAVE alleviates the suboptimal convergence caused by limited search. In addition, the anti-ego exploration mechanism also leverages the characteristics of the auto-encoder to further narrow the state-action space that needs to be explored.

\end{document}